\newtheorem{lemma}{Lemma}
\newtheorem{theorem}{Theorem}
\newtheorem{remark}{Remark}
\newtheorem{corollary}{Corollary}
\newcommand{\pnext}{.\textit{nx}}
\newcommand{\pprev}{.\textit{pr}}
\newcommand{\msf}{\mathsf}
\begin{document}
\title{On optimal relay placement in directional networks}


\author{
\IEEEauthorblockN{Mine Gokce Dogan, Yahya H. Ezzeldin, Christina Fragouli}
\IEEEauthorblockA{UCLA, Los Angeles, CA 90095, USA\\
Email: \{minedogan96, yahya.ezzeldin, christina.fragouli\}@ucla.edu}
}


\maketitle

\begin{abstract}
In this paper, we study the problem of optimal topology design in wireless networks equipped with highly-directional transmission antennas. We use the 1-2-1 network model to characterize the optimal placement of two relays that assist the communication between a source-destination pair. We analytically show that under some conditions on the distance between the source-destination pair, the optimal topology in terms of maximizing the network throughput is to place the relays as close as possible to the source and the destination.
\end{abstract}

\section{Introduction}
We consider a source that would like to transmit information to a destination with the help of $N$ relays,  using directional communication. A typical network information theory question is, given a fixed network topology, what is the maximum rate that the source can communicate to the destination (network capacity)? We here look at the reverse question: assuming that we have the freedom of arbitrarily placing the relays,  where should we place them, so that the capacity is maximized?

We focus this work on directional communication networks, 
which have significant impact in next generation systems.
Indeed, as we move towards using higher and higher frequencies,
communication increasingly becomes directional, to combat the severe path-loss and improve the data rates. A prominent example is millimeter-wave (mmWave) networks, that  are anticipated to play an integral role for the fifth-generation (5G) cellular systems due to the large bandwidth they provide.

Understanding how to optimally place relays in such networks is a timely question. There are currently a number of initiated projects that  aim to deploy infrastructure for directional communications, such as Terragraph~\cite{terragraph}. Moreover, a number of applications envisage creating mobile backbones, such as flexible UAV-assisted wireless networks~\cite{survey}, where adapting the topology has low cost. 

Finding the optimal topology is not a straightforward question to answer. 
The main challenge is the ability of beam steering, which makes the capacity calculation dependent on the schedule (which node transmits to which node and for how long). We note that the optimal schedule depends on the underlying configuration; thus for every network configuration, one would need to calculate the associated optimal schedule to find the capacity. 

On the other hand, finding the optimal topology can be worth the effort - randomly placing relays, as  
 Fig.~\ref{fig1} depicts, can result in underwhelming performance.
\begin{figure}[t]
\centerline{\includegraphics[width=0.72\columnwidth]{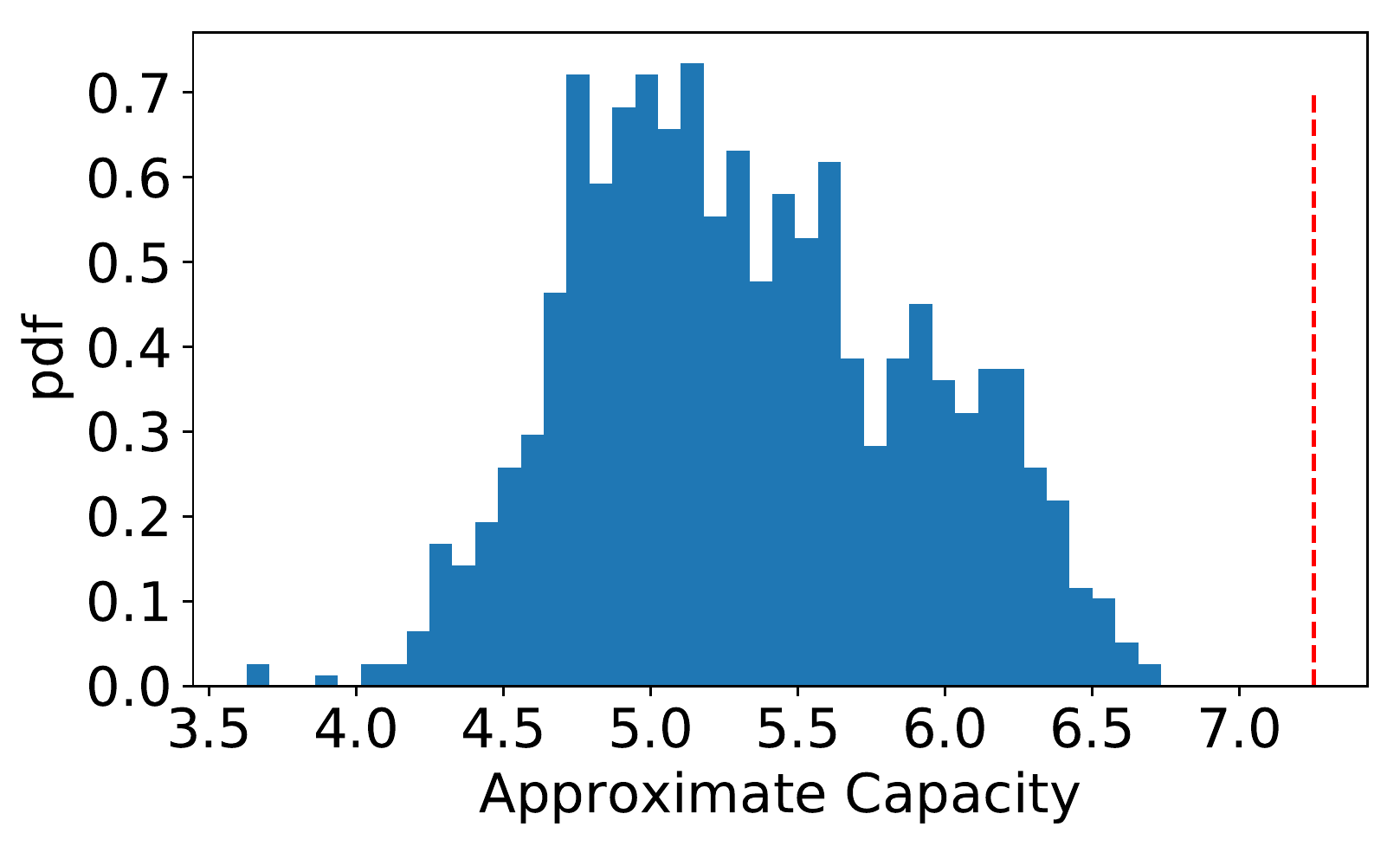}}
\caption{Empirical distribution of the approximate capacity for relays placed uniformly at random in space. The red line marks the approximate capacity of the optimal topology.}
\vspace{-1em}
\label{fig1}
\end{figure}
This figure assumes that two relays are placed uniformly at random in the space between a fixed source and destination location where the distance between the source and the destination is $d = 600\sqrt{2}$. The distribution shown in the figure was computed over 1000 random topologies with the red line indicating the performance of the optimal relay placement.
We find that the (approximate) capacity  can vary significantly depending on where the two relays are placed, and can be much lower (on average almost $50\%$ lower) than the capacity of the optimal configuration.

Our approach to address this problem builds on the 1-2-1 network model which was introduced in~\cite{ezzeldin} to study the capacity of wireless networks with steerable highly-directional antennas. In particular, in~\cite{ezzeldin}, it was proved that for a predetermined topology, the capacity of the network can be approximated to within a universal constant gap\footnote{Constant gap refers to a quantity that is independent of the channel coefficients and operating SNR, and only depends on the number of nodes.} and its optimal beam schedule can be found in polynomial-time.

Our main result in this paper is to characterize the optimal topology in terms of the approximate capacity for a two relay Gaussian Full-Duplex (FD) 1-2-1 network under a path-loss propagation model. Surprisingly, we prove that when the distance between the source and destination allows reasonable point-to-point capacity between the two nodes using directional antennas, the optimal topology concentrates relays at the source and destination positions, such that one relay is as close as possible to the source and the other to the destination. 
This understanding of the optimal topology for two relays, offers a first step towards creating optimal topologies for larger directional communication networks.

\noindent{\bf Related Work.} Several works in the literature focus on the optimization of relay locations in wireless networks that employ omnidirectional antennas. Optimal relay placement is studied within a cellular system in~\cite{WangSHCC08}. The authors in~\cite{XuZLW11} and~\cite{JointPower} characterize the optimal placement of a single amplify-forward relay in a cooperative communication network with and without joint optimization of power allocation. In \cite{TanoliSKNHAKSA20}, the authors find the best relay location over a finite set of possible locations for bi-directional transmission. However, they do not optimize the relay locations in a continuous domain and their work does not include the scheduling aspects of directional networks. In \cite{deterministic}, the optimal placement of relay nodes is studied for the linear deterministic network model for a wireless network. In contrast, our work considers the relay placement problem in networks with steerable directional antennas and does not put restrictions on the schemes that are employed by the relays. 

Perhaps the closest to our work are the results on relay placement in mmWave networks~\cite{robust,KongYWTCV17}. These approaches only consider picking the best topology among a class of predetermined topologies and beamforming schedules or present heuristic metrics based on link qualities in the network. Therefore, none of these works identifies the optimal topology structure as we do. Our study considers the unicast approximate capacity as defined by the 1-2-1 information-theoretic model and allows a more extensive topology search with each topology operating with its optimal beam schedule.

\noindent \textbf{Paper Organization.} Section~\ref{sec:model} provides background on the 1-2-1 network model for mmWave networks and introduces our channel model. Section~\ref{sec:results} introduces our main theorem, that is proved in Section~\ref{theorem_proof}. Section~\ref{sec:concl} concludes the paper.

\section{System Model and Background}\label{sec:model}
We consider the Gaussian Full-Duplex (FD) 1-2-1 model which was proposed in~\cite{ezzeldin} to study the information-theoretic capacity of multi-hop wireless networks that utilize directional transmissions for communication. In an $N$-relay Gaussian FD 1-2-1 network, $N$ relays assist the communication between a source node (node 0) and a destination node (node $N+1$). 
Each node in the network can transmit and receive simultaneously and is equipped with a single highly-directional transmit beam and a single highly-directional receive beam. 
At any particular instance, a node in a 1-2-1 network can only transmit to at most one node and receive from at most one node by directing its transmit and receive beam, respectively.\\
In order for two nodes to communicate, they need to activate the link between them by steering their beams towards each other (thus, the name of the 1-2-1 model). In the following, we discuss capacity results for FD 1-2-1 networks.

\noindent {\bf Capacity of FD 1-2-1 networks.} In~\cite{ezzeldin}, it was shown that the capacity of an $N$-relay Gaussian FD 1-2-1 network can be approximated to within a constant gap that only depends on the number of nodes in the network\footnote{The constant gap is due to the fact that the choice of the optimal schedule can potentially be used to relay information of the transmitted message. In case the schedule cannot be used to convey information about the message, the characterization is the exact capacity of the network.}. In particular, it was shown in~\cite{ezzeldin} that the approximate capacity and its optimal schedule can be computed in polynomial-time through the following Linear Program (LP),
%
\begin{align}\label{capacity_lp}
& \ \rm{P1:}\ \widebar{\msf{C}} = \displaystyle\max\limits_{\lambda,F} \sum_{j=0}^N F_{N+1,j}, & & \nonumber \\
&({\rm P1}a) \ 0 \leq F_{j,i} \leq \lambda_{\ell_{j,i}}\ell_{j,i},& \forall (i,j) {\in} [0\!:\!N] \!\times\! [1{:}N{+}1], & \nonumber\\
&({\rm P1}b) \hspace{-0.2in}\displaystyle \sum_{\substack{j \in [1:N{+}1]\backslash \{i\}}} \hspace{-0.24in} F_{j,i} = \!\!\!\!\!\!\! \displaystyle \sum_{\substack{k \in [0{:}N]\backslash \{i\}}} \hspace{-0.24in} F_{i,k}, & \forall i \in [1{:}N], & \nonumber\\
&({\rm P1}c) \! \! \! \! \displaystyle \sum_{\substack{j \in [1:N{+}1]\backslash\{i\}}} \! \! \! \! \!\! \lambda_{\ell_{j,i}} \leq 1, & \forall i \in [0{:}N], & \\
&({\rm P1}d) \! \! \! \! \displaystyle \sum_{\substack{i \in [0:N]\backslash\{j\}}} \! \! \! \!  \lambda_{\ell_{j,i}} \leq 1, & \forall j \in [1{:}N+1], & \nonumber\\
&({\rm P1}e) \ \lambda_{\ell_{j,i}} \geq 0,  &\forall (i,j)  {\in} [0\!:\!N] \!\times\! [1\!:\!N{+}1] \nonumber,
\end{align}
where: (i) $\widebar{\msf C}$ is the approximate network capacity; (ii) $\ell_{j,i}$ denotes the point-to-point link capacity from node $i$ to node $j$ when the respective transmit and receive beams are aligned; (iii) $F_{j,i}$ represents the amount of information flowing from node $i$ to node $j$; and (iv) $\lambda_{\ell_{j,i}}$ denotes the fraction of time for which the link of capacity $\ell_{j,i}$ is active, based on the schedule used to operate/align the antenna beams in the network.


\noindent{\bf Cut-set formulation of approximate capacity.}
We observe that for a fixed feasible configuration of link activation times $\{\lambda_{\ell_{ji}}\}$, the LP in~\eqref{capacity_lp} is the classical max-flow problem over a graph with edge capacities $\lambda_{\ell_{j,i}}\ell_{j,i}$. Thus, we can rewrite~\eqref{capacity_lp} by replacing the max-flow LP with its cut-set dual problem. In particular, we get the following LP,
\begin{align}\label{capacity_lp_cut}
& \ \rm{P2:}\ \widebar{\msf{C}} = \max_{\lambda, \alpha} \alpha \nonumber \\
&({\rm P2}a) \! \! \! \! \displaystyle \sum_{\substack{j \in [1:N{+}1]\backslash\{i\}}} \! \! \! \! \!\! \lambda_{\ell_{j,i}} \leq 1,  \hspace{1.05in} \forall i \in [0{:}N], \nonumber\\
&({\rm P2}b) \! \! \! \! \displaystyle \sum_{\substack{i \in [0:N]\backslash\{j\}}} \! \! \! \!  \lambda_{\ell_{j,i}} \leq 1,  \hspace{0.9in} \forall j \in [1{:}N+1], \\
&({\rm P2}c) \ \lambda_{\ell_{j,i}} \geq 0, \hspace{0.73in} \forall (i,j)  {\in} [0\!:\!N] \!\times\! [1\!:\!N{+}1] \nonumber,\\
&({\rm P2}d) \alpha \!\leq\!\!\!\!\!\!\!\sum_{\substack{(i,j): \\i \in \Omega, j \in \Omega^c}} \hspace{-0.18in}\lambda_{\ell_{j,i}}\ell_{j,i},\ \ \forall \Omega\!\subseteq\![0{:}N\!+\!1], 0 {\in} \Omega, N\!\!+\!1 {\in}\Omega^c,  \nonumber
\end{align}
where for each constraint $({\rm P2}d)$, the right-hand side (RHS) represents the cut value for that particular cut $\Omega$ given the link activations $\{\lambda_{\ell_{j,i}}\}$. Thus, the LP P2 maximizes the minimum cut by optimizing the link activation times that satisfy the constraints $({\rm P2}a-c)$.

If we specialize the LP P2 to the case where there are $N=2$ relays in a network, we get that
\begin{align}\label{capacity_lp_cut_N_2}
& \ \rm{P3:}\ \widebar{\msf{C}} = \max_{\lambda, \alpha} \alpha  \nonumber \\
&({\rm P3}a) \! \! \! \! \displaystyle \sum_{\substack{j \in [1:3]\backslash\{i\}}} \! \! \! \! \!\! \lambda_{\ell_{j,i}} \leq 1, \qquad\qquad\qquad\  \forall i \in [0{:}2],  \nonumber\\
&({\rm P3}b) \! \! \! \! \displaystyle \sum_{\substack{i \in [0:2]\backslash\{j\}}} \! \! \! \!  \lambda_{\ell_{j,i}} \leq 1, \qquad\qquad\qquad \forall j \in [1{:}3], \\
&({\rm P3}c) \ \lambda_{\ell_{j,i}} \geq 0,  \qquad\qquad\  \forall (i,j)  {\in} [0\!:\!2] \!\times\! [1\!:\!3] \nonumber,\\
&({\rm P3}d) \ \alpha \leq \lambda_{\ell_{3,1}}\ell_{3,1}+\lambda_{\ell_{3,2}}\ell_{3,2}+\lambda_{\ell_{3,0}}\ell_{3,0},  \nonumber \\
&({\rm P3}e) \ \alpha \leq \lambda_{\ell_{1,0}}\ell_{1,0}+\lambda_{\ell_{2,0}}\ell_{2,0}+\lambda_{\ell_{3,0}}\ell_{3,0},  \nonumber \\
&({\rm P3}f) \ \alpha \leq \lambda_{\ell_{2,1}}\ell_{2,1}+\lambda_{\ell_{3,1}}\ell_{3,1}+\lambda_{\ell_{2,0}}\ell_{2,0}+\lambda_{\ell_{3,0}}\ell_{3,0},  \nonumber\\
&({\rm P3}g) \ \alpha \leq \lambda_{\ell_{1,0}}\ell_{1,0}+\lambda_{\ell_{3,2}}\ell_{3,2}+ \lambda_{\ell_{1,2}}\ell_{1,2}+\lambda_{\ell_{3,0}}\ell_{3,0}.  \nonumber
\end{align}

Our main results in this paper for Gaussian 1-2-1 networks with $N=2$ relays, which are presented in the following section, will rely heavily on the cut-set formulation for the approximate capacity shown in~\eqref{capacity_lp_cut_N_2}.

\noindent {\bf Network topology and propagation model.}
Throughout the remainder of the paper, we consider a Gaussian 1-2-1 network with $N=2$ relays as illustrated in Fig.~\ref{arbitrary_network}. 
\begin{figure}[htbp]
\centerline{\includegraphics[width=0.7\columnwidth]{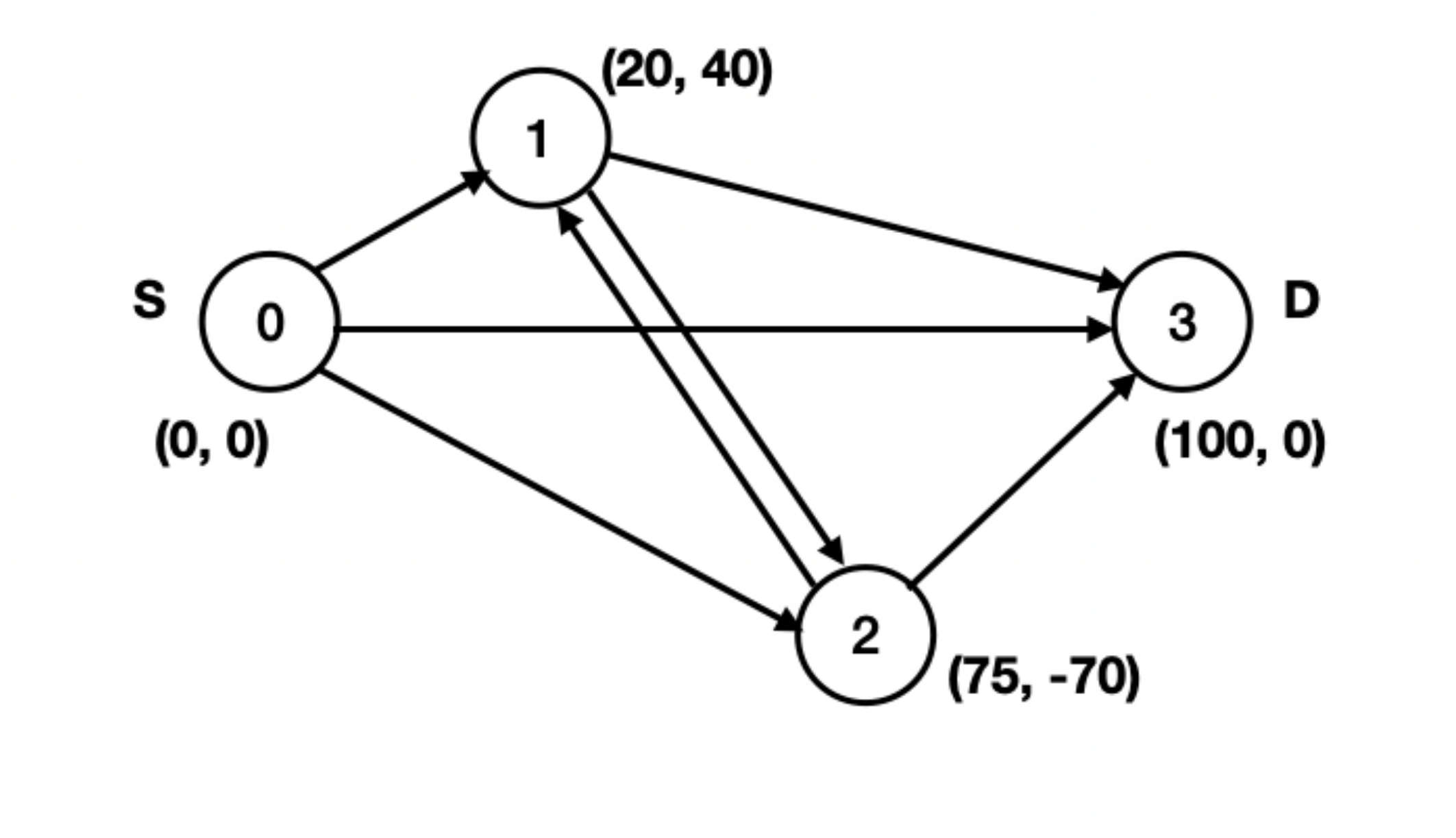}}
\caption{An example of the considered network topology.}
\label{arbitrary_network}
\end{figure}
Without loss of generality, we assume that the source is at coordinates $(0,0)$, the destination is at $(d,0)$ and the two relays are at $(x_1,y_1)$ and $(x_2,y_2)$, respectively. The x- coordinates $x_1$ and $x_2$, and y-coordinates $y_1$ and $y_2$ can be arbitrarily different. 

For the point-to-point link capacity between any two nodes with aligned transmit/receive beams, we use a path-loss model while assuming that all nodes follow a uniform transmission power constraint. In particular, we have that 
\begin{equation}\label{channel_model}
    \ell_{j,i} = \log \left(1+\frac{\gamma}{d_{j,i}^a}\right) \stackrel{(a)}{\approx} \log \left(\frac{\gamma}{d_{j,i}^a}\right)
\end{equation}
where: (i) $d_{j,i}$ is the distance between node $i$ and node $j$; (ii) $a > 1$ is the path loss exponent; (iii) the arbitrary parameter $\gamma > 0$ subsumes the effects of the transmission power, the aligned transmit and receive antenna beams and communication wavelength. The approximation in $(a)$ assumes that $\gamma/d_{j,i}^a \gg 1, \forall (i,j) \in  [0:N]\times[1:N+1]$.
\begin{remark}
{\rm 
Note that from~\eqref{channel_model}, it follows that 
\begin{align*}
\frac{\gamma}{d_{j,i}^a} > 1 \ \ \implies \ \ \ell_{j,i} = \log \left(1+\frac{\gamma}{d_{j,i}^a}\right)\leq \log \left(\frac{\gamma}{d_{j,i}^a}\right) + 1.
\end{align*}
As a result, it is not difficult to see that working with the approximation of $\ell_{j,i}$ in computing the approximate capacity using LP P1 in~\eqref{capacity_lp}, can at most reduce the approximate capacity by a single bit. Thus, with a slight abuse of notation, we take $\ell_{j,i} = \log \left(\gamma / d_{j,i}^a\right)$ for the remainder of the paper.
}  
\end{remark}
In the following section, we present our key result in this paper that characterizes the best topology among the class of 2-relay Gaussian 1-2-1 networks described above.

\section{Optimal Topology for the Two Relay Network}\label{sec:results}
In this section, we answer  the question: what is the optimal placement of two relays in a Gaussian 1-2-1 FD network that maximizes the rate transferred between a source and a destination? Our key result is summarized by the following theorem which is proved in Section~\ref{theorem_proof}.
\begin{theorem}\label{main_theorem}
\textit{Consider a $2$-relay Gaussian FD 1-2-1 network with a topology and propagation model as described in Section~\ref{sec:model}. For distance $d$ such that $\frac{\gamma}{d^a} > 3^a$, the optimal topology in terms of maximizing the approximate capacity is to place one of the relays as close to the source as possible and the other relay as close to the destination as possible.}
\end{theorem}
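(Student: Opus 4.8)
The plan is to sandwich the approximate capacity between matching bounds, both equal to $2c$ where $c:=\log(\gamma/d^a)$: a lower bound realized by the endpoint topology and an upper bound valid for every placement. Throughout I work with the cut-set form P3, reading its four constraints as the source cut (P3e) for $\Omega=\{0\}$, the sink cut (P3d) for $\Omega=\{0,1,2\}$, and the two middle cuts (P3f) for $\Omega=\{0,1\}$ and (P3g) for $\Omega=\{0,2\}$; the approximate capacity is then the maximum over feasible schedules $\lambda$ of the minimum of the four cut values, subject to the transmit/receive budgets (P3a)--(P3c). So the goal splits into achievability for the candidate optimum and a converse for an arbitrary topology.

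First I would establish achievability. Driving relay $1$ to the source and relay $2$ to the destination makes $\ell_{1,0},\ell_{3,2}\to\infty$, while every remaining (length-$d$) link has $\ell_{2,0}=\ell_{3,0}=\ell_{3,1}=\ell_{2,1}=c$. I would then exhibit the symmetric schedule $\lambda_{\ell_{2,0}}=\lambda_{\ell_{3,1}}=\lambda_{\ell_{3,0}}=\lambda_{\ell_{2,1}}=\tfrac12$, feeding relay $1$ through $\ell_{1,0}$ and draining relay $2$ through $\ell_{3,2}$ with vanishing activation (free, since those capacities diverge), and verify that all four cuts are at least $2c$ with (P3f) tight. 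The converse for this topology is immediate: in (P3f) every capacity equals $c$, so its value is $c(\lambda_{\ell_{2,1}}+\lambda_{\ell_{3,1}}+\lambda_{\ell_{2,0}}+\lambda_{\ell_{3,0}})\le c(1+1)=2c$ by the transmit budgets of nodes $1$ and $0$. Hence the endpoint topology achieves exactly $2c$, and this step needs no hypothesis on $d$.

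The core is the converse for an arbitrary placement: $\widebar{\msf C}\le 2c$ whenever $\gamma/d^a>3^a$. Two facts drive it. From node $0$'s transmit budget, the source cut (P3e) is at most $\max(\ell_{1,0},\ell_{2,0},\ell_{3,0})$, and symmetrically (P3d) is at most $\max(\ell_{3,0},\ell_{3,1},\ell_{3,2})$; and a length-$\rho$ link exceeds capacity $2c$ only if $\rho<d^2\gamma^{-1/a}<d/3$, where the last step is exactly $\gamma/d^a>3^a$. Combined with the triangle inequalities $d_{1,0}+d_{3,1}\ge d$ and $d_{2,0}+d_{3,2}\ge d$, no single relay can lie within $d/3$ of both endpoints. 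Assuming $\widebar{\msf C}>2c$ forces both (P3e) and (P3d) above $2c$ (since $\ell_{3,0}=c$ cannot do it alone), which pins down a near-endpoint configuration: one relay within $d/3$ of the source and the other within $d/3$ of the destination, the two labelings being symmetric. In that regime $\ell_{1,0}$ and $\ell_{3,2}$ are the large links, so (P3e), (P3d), (P3g) are non-binding whenever their associated feeder activations are positive, and the binding cut is (P3f).

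The main obstacle is precisely this last bound. The crude geometric estimate gives only (P3f) $<2c+2a\log 3$, because a relay that is near but not at an endpoint leaves the three length-$\approx d$ links $\ell_{2,0},\ell_{3,1},\ell_{2,1}$ strictly above $c$ by an amount that does not vanish. The excess can only be absorbed by accounting for the cost of feeding relay $1$ and draining relay $2$: keeping (P3e) and (P3g) above $2c$ requires positive $\lambda_{\ell_{1,0}}$, which consumes node $0$'s transmit budget and forces $\lambda_{\ell_{2,0}}+\lambda_{\ell_{3,0}}<1$ in (P3f), and symmetrically $\lambda_{\ell_{3,2}}$ eats node $3$'s receive budget. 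The delicate point is that this feeding cost scales like $c/\log(\gamma/d_{1,0}^a)$, which dominates the geometric gain $\sim a\,d_{1,0}/d$ throughout the admissible range $d_{1,0}\in(0,d/3)$, so the two effects never combine to exceed $2c$ (and, as a sanity check, a midpoint-clustered placement is killed outright since there (P3e)$\le c+a\log 2<2c$). The hypothesis $\gamma/d^a>3^a$ is exactly what guarantees this domination on the whole range rather than only in the limit. Making the trade-off rigorous --- instead of the heuristic above --- is where the real work lies, and I would do it by discarding the non-binding cuts, substituting the feeder constraints into (P3f), and solving the resulting reduced two-variable linear program, showing its optimum equals $2c$ and is attained only as the relays reach the endpoints.
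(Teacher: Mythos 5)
Your skeleton is genuinely different from the paper's and its first half is sound: you work directly with the asymmetric cut-set LP~P3 and localize the relays geometrically, observing that under $\gamma/d^a>3^a$ a link can exceed capacity $2\log(\gamma/d^a)$ only if its length is below $d/3$, so that cuts (P3d) and (P3e) together with the triangle inequality force one relay within $d/3$ of the source and the other within $d/3$ of the destination. That localization argument does not appear in the paper, which instead projects the relays onto the source--destination axis, proves a symmetrization lemma (Lemma~\ref{symmetric_network_lemma}, Appendix~\ref{asymm_symm}) reducing to $\beta_1=\beta_2=\beta$, establishes structural properties of the optimal schedule via a KKT argument (Lemma~\ref{properties}, Appendix~\ref{symmetry_properties}) that collapse P3 to the one-variable LP~P4, and then splits into the ranges $\beta>d/\gamma^{1/a}$ and $\beta\le d/\gamma^{1/a}$, the latter handled by a monotonicity argument on an explicit function of $\beta$ (Lemma~\ref{third_range}, Appendix~\ref{appendix_third_range}). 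Your achievability and the matching converse for the endpoint topology are also fine.

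The problem is that your proposal stops exactly where the content of the theorem begins. The final step --- showing that for every near-endpoint placement the cut (P3f) cannot reach $\widebar{\msf{C}}^\star$ once the feeder activations $\lambda_{\ell_{1,0}}$ and $\lambda_{\ell_{3,2}}$ demanded by (P3e) and (P3d) are accounted for --- is asserted through an unverified heuristic (the claim that the feeding cost $c/\log(\gamma/d_{1,0}^a)$ dominates the geometric gain over the whole range) and a promise to "solve the resulting reduced two-variable linear program." That reduced problem is not two-variable: without the paper's symmetrization it retains both relay positions as free parameters, the splits of node $0$'s and node $1$'s transmit budgets and of node $2$'s and node $3$'s receive budgets as variables, and (P3g) as a further constraint; showing its optimum is at most $\widebar{\msf{C}}^\star$ uniformly over the admissible range is precisely the theorem. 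Note also that the two deficits do not simply add in (P3f): summing the transmit budgets of nodes $0$ and $1$ bounds the total activation by $2-\lambda_{\ell_{1,0}}$, while summing the receive budgets of nodes $2$ and $3$ bounds it by $2-\lambda_{\ell_{3,2}}$, so this route only yields $2-\max\left(\lambda_{\ell_{1,0}},\lambda_{\ell_{3,2}}\right)$, which is one reason the trade-off is delicate. Everything in the paper's three appendices exists to make exactly this step rigorous, so as written your argument has a genuine gap at its crux.
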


A careful computation of the optimal solution of LP P3 in~\eqref{capacity_lp_cut_N_2} when the two relays approach the source and destination, respectively, gives the following corollary.

\begin{corollary}\label{optimum_cap}
For the 2-relay Gaussian 1-2-1 network with conditions satisfying Theorem~\ref{main_theorem}, the best approximate capacity achievable by any topology is given by
\begin{equation}
\begin{aligned}
\widebar{\msf{C}}^\star = 2\log\left(\frac{\gamma}{d^a}\right).
\end{aligned}
\end{equation}
\end{corollary}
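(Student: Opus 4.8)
The plan is to work directly with the cut-set LP $(\mathrm{P3})$, since $\widebar{\msf C}=\max_{\lambda,\alpha}\alpha$ is the schedule-optimized min-cut. First I would establish a monotonicity reduction: the link capacities $\ell_{j,i}$ enter only the right-hand sides of $(\mathrm{P3}d)$–$(\mathrm{P3}g)$, so increasing any $\ell_{j,i}$ enlarges the feasible set and cannot decrease $\widebar{\msf C}$. Projecting both relays orthogonally onto the source--destination line does not increase any pairwise distance (source and destination already lie on the line, and $|x_1-x_2|\le d_{2,1}$), hence cannot decrease any $\ell_{j,i}$. It therefore suffices to treat relays on the segment, parametrized by their abscissae $x_1,x_2$, with $\ell_{j,i}=\log(\gamma/d_{j,i}^{\,a})$ and $d_{3,0}=d$, so $\ell_{3,0}=\log(\gamma/d^a)$.

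For achievability I would exhibit, as $x_1\to 0$ and $x_2\to d$, a schedule attaining $\widebar{\msf C}\to 2\log(\gamma/d^a)$. Take $\lambda_{\ell_{2,1}}=\lambda_{\ell_{3,1}}=\lambda_{\ell_{2,0}}=\lambda_{\ell_{3,0}}=\tfrac12$, assign the two vanishing-length links $\ell_{1,0},\ell_{3,2}$ an arbitrarily small activation, and set the rest to $0$; one checks $(\mathrm{P3}a)$–$(\mathrm{P3}c)$ hold. In the limit $\ell_{2,0},\ell_{2,1},\ell_{3,1}\to\log(\gamma/d^a)$ while $\ell_{1,0},\ell_{3,2}\to\infty$, so cut $(\mathrm{P3}f)$ tends to $\tfrac12(\ell_{2,1}+\ell_{3,1}+\ell_{2,0}+\ell_{3,0})\to 2\log(\gamma/d^a)$, whereas $(\mathrm{P3}d)$, $(\mathrm{P3}e)$, $(\mathrm{P3}g)$ each contain a positively-activated infinite link and diverge. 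The min-cut thus tends to $2\log(\gamma/d^a)$, which also yields the Corollary by direct substitution.

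The heart of the argument is the converse, $\widebar{\msf C}\le 2\log(\gamma/d^a)$ for every placement. Since $\widebar{\msf C}=\max_\lambda\min\{(\mathrm{P3}d),\dots,(\mathrm{P3}g)\}$, this is equivalent to showing no feasible schedule can push all four cuts above $2\log(\gamma/d^a)$ at once. I would argue by cases on the relay positions, in each regime selecting the cut(s) that stay finite and bounding them with the identity $\ell_{i,0}+\ell_{3,i}=2\log(\gamma/d^a)+a\log\!\big(d^2/(x_i(d-x_i))\big)$ together with the transmit/receive budgets $(\mathrm{P3}a)$–$(\mathrm{P3}b)$: a short link \emph{into} a relay forces the opposite link to be long, so the schedule cannot fund two high-rate hops through one relay. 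The decisive regime is when the relays create three short, roughly equal hops of length $\approx d/3$ along $0\!\to\!1\!\to\!2\!\to\!3$; every hop then has capacity $\log(\gamma/(d/3)^a)=\log(\gamma/d^a)+a\log 3$, and the min-cut approaches this value. The hypothesis $\gamma/d^a>3^a$ is precisely $\log(\gamma/d^a)>a\log 3$, i.e.\ that the chain value $\log(\gamma/d^a)+a\log 3$ stays below $2\log(\gamma/d^a)$, so endpoint concentration wins.

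The main obstacle is exactly this converse. The difficulty is that the binding cut among $(\mathrm{P3}d)$–$(\mathrm{P3}g)$ changes with the topology: near the optimal endpoint placement the only finite cut is $(\mathrm{P3}f)$, yet the single-cut relaxation $\max_\lambda(\mathrm{P3}f)$ overshoots $2\log(\gamma/d^a)$ because its maximizing schedule starves the source or destination cut, so a one-cut bound is too weak. One must instead combine the source cut $(\mathrm{P3}e)$, the destination cut $(\mathrm{P3}d)$, and the relay-to-relay capacity simultaneously and show, via the spacing identity above, that their common attainable value cannot exceed $2\log(\gamma/d^a)$ once $\gamma/d^a>3^a$. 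Confirming that the $d/3$-chain is genuinely the worst competitor, rather than some asymmetric placement, is the technically delicate step, and is where the constant $3^a$ in the hypothesis is pinned down.
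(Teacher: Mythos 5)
Your achievability computation is essentially the paper's proof of this corollary: evaluate LP P3 at the endpoint topology, observe that the finite cut is $({\rm P3}f)$ with all four surviving links equal to $\log(\gamma/d^a)$, and pick a schedule whose node budgets sum to $2$ on that cut. One small slip: the schedule $\lambda_{\ell_{2,1}}=\lambda_{\ell_{3,1}}=\lambda_{\ell_{2,0}}=\lambda_{\ell_{3,0}}=\tfrac12$ already saturates the transmit budget of node $0$ and the receive budget of node $3$, so adding ``an arbitrarily small activation'' on $\ell_{1,0}$ and $\ell_{3,2}$ violates $({\rm P3}a)$ and $({\rm P3}b)$; you must shave $\epsilon$ off $\lambda_{\ell_{3,0}}$ first and then take limits. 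With that fix the cut $({\rm P3}f)$ tends to $2\log(\gamma/d^a)$ while the other three cuts diverge, which is exactly what the paper does (with $\lambda_{\ell_{2,0}}=\lambda_{\ell_{3,1}}=1$ instead of the half-half split).

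The genuine gap is in your converse. The statement ``the best approximate capacity achievable by \emph{any} topology is $2\log(\gamma/d^a)$'' is, for the upper-bound direction, precisely Theorem~\ref{main_theorem}, which the corollary is entitled to cite; the paper's proof of the corollary does exactly that and reduces to the one-line computation above. You instead set out to re-derive the converse directly from P3, but what you give is a plan, not a proof: you correctly identify that no single cut suffices (the $\max_\lambda$ of $({\rm P3}f)$ alone can exceed $2\log(\gamma/d^a)$ for interior placements such as the $d/3$ chain), you write down the useful spacing identity $\ell_{i,0}+\ell_{3,i}=2\log(\gamma/d^a)+a\log\bigl(d^2/(x_i(d-x_i))\bigr)$, and you then declare that ruling out asymmetric placements is ``the technically delicate step'' without carrying it out. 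That step is the entire content of the paper's Section~\ref{theorem_proof} and its appendices: the projection argument, the symmetrization Lemma~\ref{symmetric_network_lemma} (which disposes of exactly the asymmetric placements you flag as unresolved), the schedule-structure Lemma~\ref{properties} reducing P3 to the one-parameter LP P4, and the two-category analysis in $\beta$ (a monotonicity/contradiction argument for $\beta>d/\gamma^{1/a}$ and Lemma~\ref{third_range} showing the cuts $({\rm P4}d)$ and $({\rm P4}f)$ cannot simultaneously reach $2\log(\gamma/d^a)$ for $\beta\le d/\gamma^{1/a}$, which is where the hypothesis $\gamma/d^a>3^a$ actually enters). As written, your argument does not establish the upper bound; either invoke Theorem~\ref{main_theorem} and keep only the achievability computation, or supply the case analysis you have deferred.
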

\begin{proof}
Setting the two relays at the source and destination positions, respectively, the link capacities $\ell_{10}$ and $\ell_{32}$ in LP P3 in~\eqref{capacity_lp_cut_N_2} grow infinitely large. 
Thus, only the constraint $(\rm{P3}f)$ remains as an upper bound on $\alpha$. Additionally, we have $\ell_{20} = \ell_{21} = \ell_{31} = \ell_{30} = \log\left(\gamma/d^a\right)$. Setting two non-conflicting $\lambda$'s to unity (for example $\lambda_{20} = \lambda_{31} = 1$) maximizes the RHS of $(\rm{P3}f)$ and proves the corollary.
\end{proof}
\begin{remark}
{\rm 
The results in Theorem~\ref{main_theorem} and Corollary~\ref{optimum_cap} suggest that in the absence of potential blockage between the source and destination, the optimal relay placement is for the network to concentrate two nodes (source and one relay) at the source location, and similarly at the destination location, in order to virtually increase the number of transmit and receive beams at the disposal of the source and destination, respectively. Our simulation in Fig.~\ref{varying_beta} also suggests that, for $d = 200\sqrt{2}$ (it satisfies the condition in Theorem~\ref{main_theorem}), the approximate capacity increases as $\beta$ decreases (the relays approach the source and the destination). We note that the reason for the change of the slope in Fig.~\ref{varying_beta} is due to the change in the optimal schedule. When $\beta > 1/3$, the optimal schedule becomes a routing schedule, i.e., the source sends all the information to relay $1$, relay $1$ sends it to relay $2$ and relay $2$ sends it to the destination.
}
\end{remark}
\begin{figure}[t]
\centerline{\includegraphics[width=0.6\columnwidth]{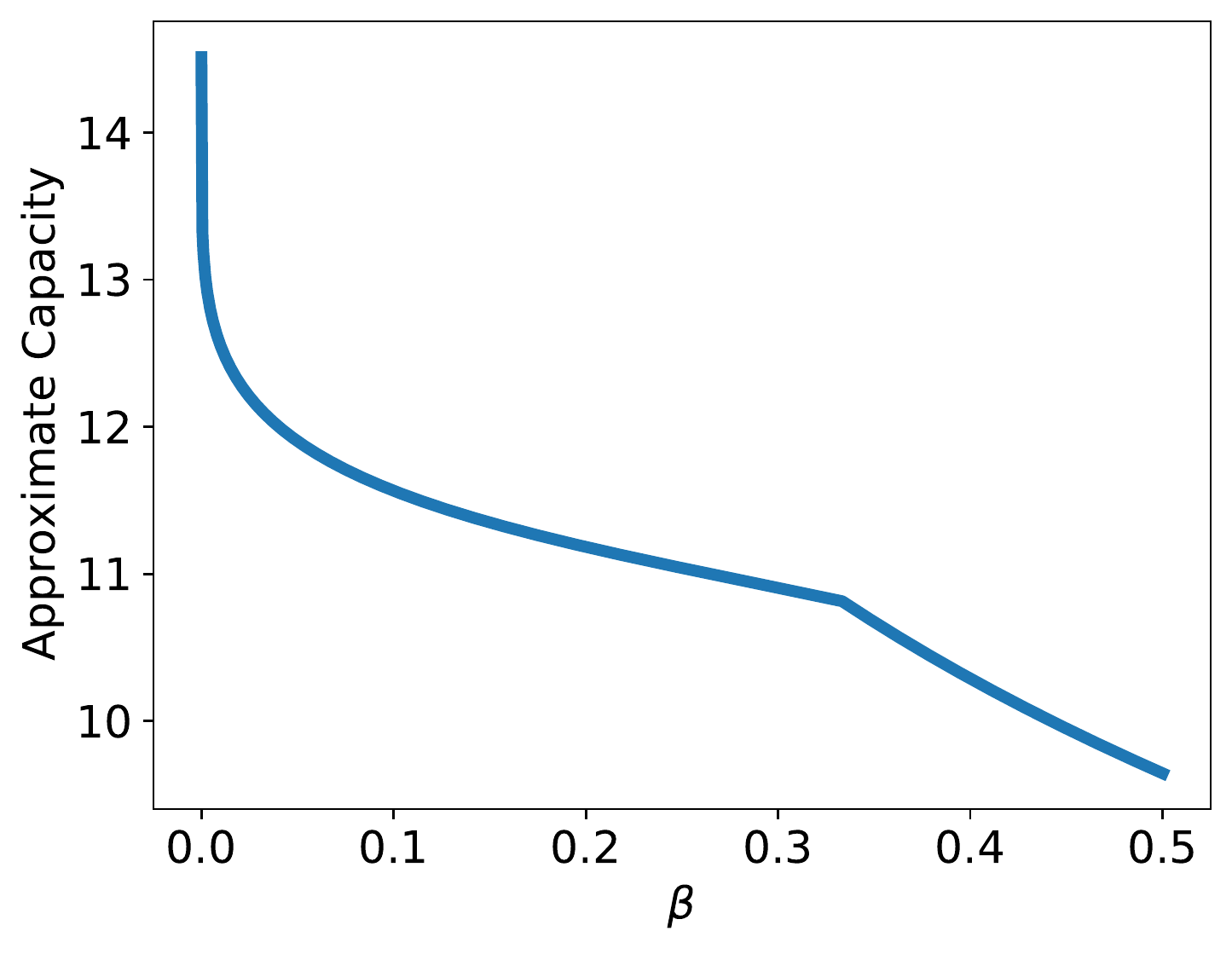}}
\caption{Change in the approximate capacity with respect to $\beta$, for $d = 200\sqrt{2}$ and $a = 2$.}
\label{varying_beta}
\end{figure}

\begin{remark}
{\rm 
Intuitively, one would expect that as the distance between the source and destination increases, the relays would try to form a route (line network) between the source and destination in order to mitigate the effect of path-loss. Theorem~\ref{main_theorem} can be viewed as a sufficient condition on the distance for which such an approach is no longer the optimal. In fact it is not difficult to show analytically that for $d \geq \gamma^{1/a} / 3$, the single route (line network) with $d_{10} = d_{21} = d_{32} = d/3$ shown in Fig.~\ref{line_network} outperforms the topology suggested by Theorem~\ref{main_theorem}.
Our evaluations at different distances, shown in Fig.~\ref{comparison}, also suggest that this critical bound on $d$ holds even when the approximate capacity is found without the approximation in~\eqref{channel_model}. 
}
\end{remark}
\begin{figure}[t]
\centerline{\includegraphics[width=0.65\columnwidth]{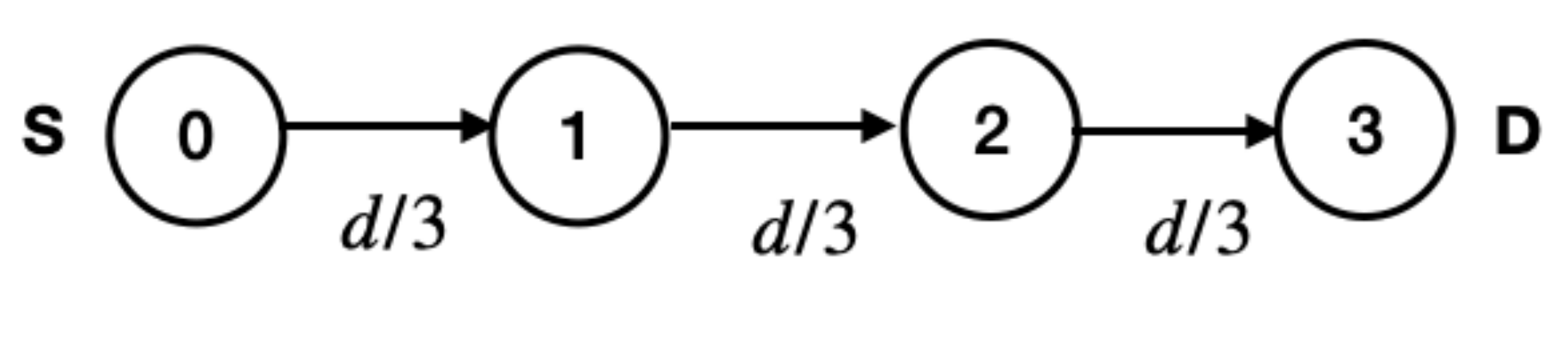}}
\caption{The line network with equal spacing.}
\vspace{-1em}
\label{line_network}
\end{figure}
\begin{figure}[t]
\centerline{\includegraphics[width=0.7\columnwidth]{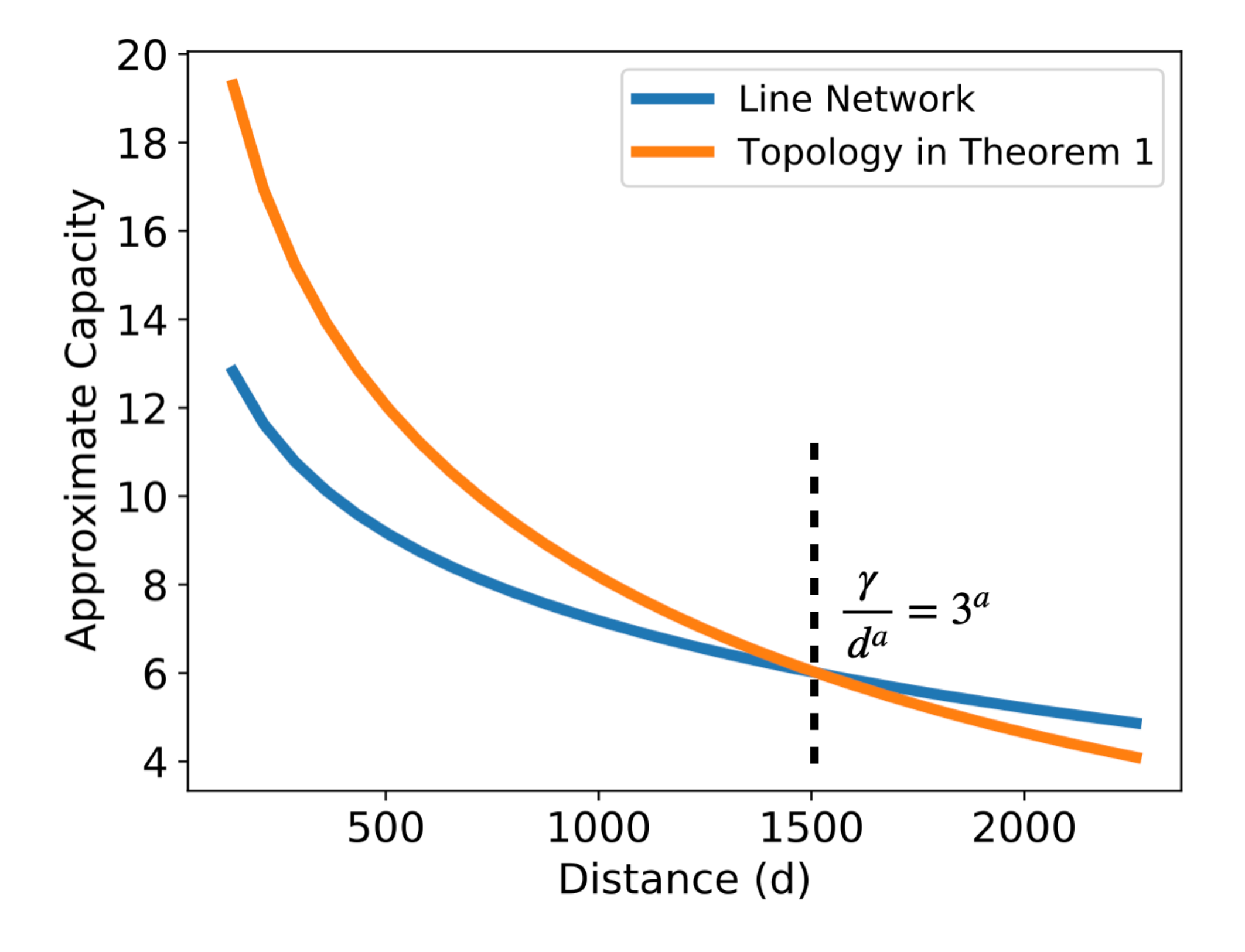}}
\caption{Comparison between the approximate capacity of the topology proposed in Theorem~\ref{main_theorem} and the line network with equal spacing between nodes when $a = 2$. 
}
\label{comparison}
\end{figure}

\section{Proof of Theorem~\ref{main_theorem}}\label{theorem_proof}
We start this section by giving an outline of the key steps in the proof of Theorem~\ref{main_theorem}, before delving into the details of the proof in the following subsections.
\subsection{Proof Outline}
\noindent{\bf [1. Projected Topologies]} The first step in proving Theorem~\ref{main_theorem} is to observe that for any network topology, projecting the topology on the axis connecting the source and destination, can only improve the approximate capacity since the distance between any two nodes in the network does not increase and as a result the approximate capacity from LP P3 cannot decrease. Thus in the remainder of the proof, we will focus our attention on the network topologies where the y-coordinates of the two relays are set to zero. For the example topology shown in Fig.~\ref{arbitrary_network}, its projected counterpart is shown in Fig.~\ref{projected_network}. 
\begin{figure}[htbp]
\centerline{\includegraphics[width=0.7\columnwidth]{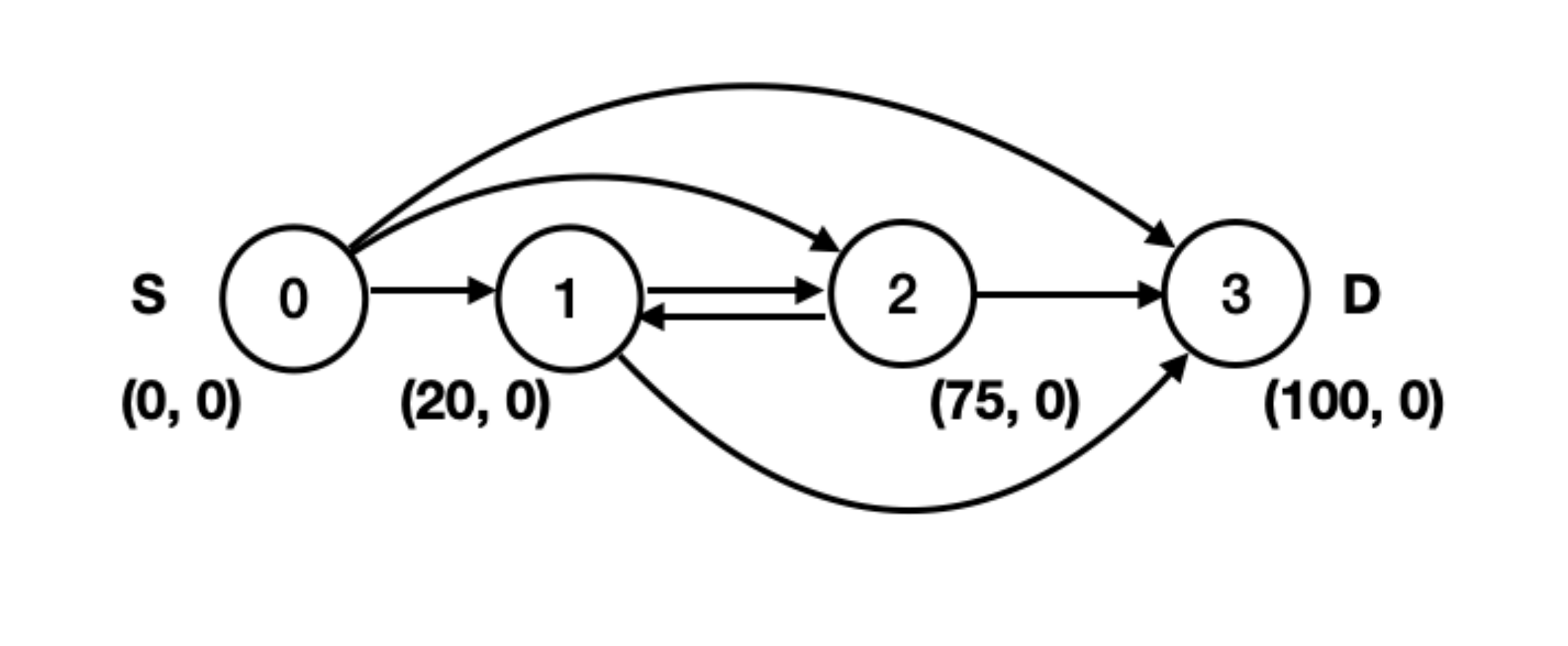}}
\caption{Projected topology for the topology shown in Fig.~\ref{arbitrary_network}.}
\label{projected_network}
\end{figure}
To ease the representation of such topologies, we define the distances between the nodes in the network using fractions $\beta_1$ and $\beta_2$ of the distance $d$ between the source and destination where $\beta_1,~\beta_2 \in [0,1]$. In particular, the two relays are assumed to be located at the coordinates $(\beta_1d,0)$ and $(d(1-\beta_2),0)$, respectively where $\beta_1d \leq d-\beta_2d$. For the example topology in Fig.~\ref{projected_network}, $d= 100$, $\beta_1d = 20$ and $\beta_2d = 25$.

\medskip

\noindent{\bf [2. Symmetric Networks]}
The next step in proving Theorem~\ref{main_theorem} is to leverage the following lemma. 
\begin{lemma}\label{symmetric_network_lemma}
For a projected asymmetric network as in Fig.~\ref{projected_network}, we can always find a projected symmetric network ($\beta_1' = \beta_2' = \beta$) that achieves at least the same approximate capacity.
\end{lemma}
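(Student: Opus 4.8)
The plan is to reach a symmetric network in two moves: first a reflection symmetry, then a comparison against the two networks obtained by collapsing the relays to a common fraction. For the reflection step I observe that reflecting every node about the midpoint $x=d/2$ is an isometry, so it preserves the multiset of pairwise distances $\{d_{j,i}\}$ and hence the value of the LP P3, which is itself invariant under relabelling the source with the destination and relay~$1$ with relay~$2$. Reading off the reflected configuration shows it is exactly the network with parameters $(\beta_2,\beta_1)$, so $\widebar{\msf{C}}(\beta_1,\beta_2)=\widebar{\msf{C}}(\beta_2,\beta_1)$, and I may assume without loss of generality that $\beta_1\le\beta_2$ (relay~$1$ sits at least as close to the source as relay~$2$ sits to the destination).

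The core of the argument is then to show that at least one of the two collapsed symmetric networks — either $(\beta_1,\beta_1)$, pulling relay~$2$ towards the destination, or $(\beta_2,\beta_2)$, pushing relay~$1$ away from the source — has approximate capacity at least $\alpha^\star=\widebar{\msf{C}}(\beta_1,\beta_2)$. I would work entirely inside the cut-set LP P3. The key leverage is that the feasibility constraints $(\mathrm{P3}a)$–$(\mathrm{P3}c)$ involve only the schedule $\{\lambda_{\ell_{j,i}}\}$ and not the node positions, so any schedule feasible for the asymmetric network stays feasible after collapsing; only the four cut values $(\mathrm{P3}d)$–$(\mathrm{P3}g)$ change. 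Concretely, moving relay~$2$ towards the destination shortens $d_{3,2}$, raising $\ell_{3,2}$ and the destination-side cuts, but lengthens $d_{2,0}$ and $d_{2,1}$, lowering the source-side and middle cuts; moving relay~$1$ gives the mirror trade-off. The plan is to take an optimal schedule together with a minimum (active) cut of the asymmetric LP and collapse the relays in the direction that shortens precisely the links appearing in that active cut, so its value cannot fall below $\alpha^\star$; I then re-optimise the schedule on the symmetric network — whose reflection symmetry lets me restrict to symmetric schedules and identifies cut~$d$ with cut~$e$, reducing the number of distinct binding constraints — to confirm that the remaining cuts also stay above $\alpha^\star$.

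I expect the main obstacle to be precisely this tension between the cuts. Since $\ell_{j,i}=\log(\gamma/d_{j,i}^a)$ is a convex function of the node positions, symmetrising by naively averaging a network with its reflection strictly decreases every link (via $\sqrt{\beta_1\beta_2}\le(\beta_1+\beta_2)/2$), so the midpoint network is the wrong target and there is no uniform monotonicity of $\widebar{\msf{C}}$ in a single coordinate. The delicate point is therefore to prove that collapsing in the correct direction never sacrifices the binding cut faster than it gains elsewhere. I anticipate this requires a short case analysis on which of the four cuts is active at the asymmetric optimum — a routing-type schedule versus a more balanced one, as foreshadowed in the Remark following Corollary~\ref{optimum_cap} — combined with the explicit two-constraint form of the symmetric LP, which supplies the improving schedule needed to certify $\widebar{\msf{C}}(\beta,\beta)\ge\alpha^\star$ in each case.
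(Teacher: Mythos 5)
Your reflection step is fine: since $\ell_{j,i}$ depends only on distance, reversing all flows and swapping the transmit/receive constraints shows $\widebar{\msf{C}}(\beta_1,\beta_2)=\widebar{\msf{C}}(\beta_2,\beta_1)$, so assuming $\beta_1\le\beta_2$ costs nothing. The gap is in the core step, and the mechanism you propose for it cannot work as stated. Consider the source-side cut $({\rm P3}e)$, whose value is $\lambda_{\ell_{1,0}}\ell_{1,0}+\lambda_{\ell_{2,0}}\ell_{2,0}+\lambda_{\ell_{3,0}}\ell_{3,0}$. Collapsing to $(\beta_2,\beta_2)$ moves relay~$1$ away from the source and strictly decreases $\ell_{1,0}$; collapsing to $(\beta_1,\beta_1)$ moves relay~$2$ away from the source and strictly decreases $\ell_{2,0}$. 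Hence \emph{both} admissible collapses strictly decrease the value of $({\rm P3}e)$ under the unchanged schedule (whenever the corresponding $\lambda$'s are positive), so the rule ``collapse in the direction that shortens the links of the active cut'' selects no direction when $({\rm P3}e)$ is binding, and that cut drops below $\alpha^\star$ either way. You acknowledge that a re-optimised schedule must then rescue the bound, but you never exhibit one; the sentence ``a short case analysis \dots\ supplies the improving schedule'' is the entire content of the lemma, not an argument for it.

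The paper closes exactly this gap by working in the flow LP P1 rather than the cut LP. It first shows that an optimal solution can be taken with $\lambda_{\ell_{1,2}}=0$ and $\lambda_{\ell_{3,1}}=\lambda_{\ell_{2,0}}=\lambda$, and then proves the key quantitative fact: for $\beta_1<\beta_2$ the optimal flow saturates the long source link, $F_{2,0}=\lambda\ell_{2,0}$, whence $F_{3,1}<F_{2,0}$. This implies that the one link whose capacity shrinks under the collapse to $(\beta_2,\beta_2)$, namely $(0,1)$ with new capacity $\ell'_{1,0}=\ell_{3,2}$, was never loaded up to its new ceiling: $F_{1,0}=F_{3,1}+F_{2,1}<F_{2,0}+F_{2,1}=F_{3,2}\le(1-\lambda-\lambda_{\ell_{3,0}})\ell_{3,2}$. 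The identical schedule and flows therefore remain feasible in the symmetric network, and the cut-value bookkeeping you were worried about never has to be done. Note also that the correct target is always $\beta=\max(\beta_1,\beta_2)$; your two-candidate framing is not wrong, but the choice is not determined by which cut is active. To repair your argument you would need to prove a saturation statement of this type about the optimal flow, which is the idea missing from your proposal.
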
 
\begin{proof}
The proof is delegated to Appendix \ref{asymm_symm}.
\end{proof}
Therefore in the remainder of the proof of Theorem~\ref{main_theorem}, we will focus on projected symmetric network topologies where we use fraction $\beta \in [0, 1/2]$ of the distance $d$ to define the distances between nodes as follows.
\begin{equation}\label{modified_distances}
\begin{aligned}
    d_{1,0} = d_{3,2} &= \beta d,\\
    d_{2,0} = d_{3,1} &= (1-\beta)d,\\
    d_{1,2} = d_{2,1} &= (1-2\beta)d,\\
    d_{3,0} &= d.
\end{aligned}
\end{equation}
From this, it follows that the link capacities can be written as follows by using the channel model in \eqref{channel_model}.
\begin{equation}\label{modified_linkcap}
\begin{aligned}
    \ell_{1,0} = \ell_{3,2} = \ell_1 &= \log\left(\frac{\gamma}{\beta^ad^a}\right),\\
    \ell_{2,0} = \ell_{3,1} = \ell_2 &= \log\left(\frac{\gamma}{(1-\beta)^ad^a}\right),\\
    \ell_{1,2} = \ell_{2,1} = \ell_3 &= \log\left(\frac{\gamma}{(1-2\beta)^ad^a}\right),\\
    \ell_{3,0} = \ell_4 &= \log\left(\frac{\gamma}{d^a}\right).\\
\end{aligned}
\end{equation}

\medskip

\noindent{\bf [3. Properties of optimal solution for each topology]} Next, we prove that for the class of projected symmetric topologies described in the previous step, there always exists an optimal solution for the LP P3 in~\eqref{capacity_lp_cut_N_2} where: $(i)$ the link activation times are symmetric; and $(ii)$ the source and destination do not need to directly communicate. This is presented by the following lemma. 

\begin{lemma}\label{properties}
Consider a 2-relay Gaussian FD 1-2-1 network, with source and destination at $(0,0)$ and $(d,0)$, respectively. Assume that for some $\beta \in [0,1/2]$, the two relays are located at the coordinates $(\beta d, 0)$ and $((1-\beta)d, 0)$, respectively, then there exists an optimal solution for LP {\rm P3} that satisfies the following properties:
\begin{align}
&\label{property1} (a) \quad 
\begin{aligned}
\lambda_{\ell_{1,0}} = \lambda_{\ell_{3,2}} = \lambda_1,\\
\lambda_{\ell_{2,0}} = \lambda_{\ell_{3,1}} = \lambda_2,
\end{aligned} \\
&\label{property2} (b) \quad 
\begin{aligned}
\lambda_{\ell_{1,2}} = 0,\\
\lambda_2 + \lambda_{\ell_{2,1}} = 1,
\end{aligned} \\
&\label{property3} (c) \quad 
\begin{aligned}
\lambda_{\ell_{3,0}} = 0,\\
\lambda_{1} + \lambda_{2} = 1.
\end{aligned}
\end{align}
\end{lemma}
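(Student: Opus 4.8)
My plan is to prove the three properties in the order \eqref{property1}, then \eqref{property2}--\eqref{property3}, after first reducing to a symmetric schedule.

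\emph{Symmetrization for \eqref{property1}.} The segment $[0,d]$ carries the reflection isometry $r(x)=d-x$, which exchanges the source and the destination and sends relay $1$ (at $\beta d$) to relay $2$ (at $(1-\beta)d$); it preserves all pairwise distances and hence all link capacities in \eqref{modified_linkcap}. Reversing the flow direction, $r$ induces the permutation $\sigma:0\leftrightarrow 3,\ 1\leftrightarrow 2$ acting on directed links by $(i\to j)\mapsto(\sigma(j)\to\sigma(i))$; in particular $\lambda_{\ell_{1,0}}\leftrightarrow\lambda_{\ell_{3,2}}$ and $\lambda_{\ell_{2,0}}\leftrightarrow\lambda_{\ell_{3,1}}$, while $\lambda_{\ell_{2,1}},\lambda_{\ell_{1,2}},\lambda_{\ell_{3,0}}$ are fixed. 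A direct check shows $\sigma$ maps the constraint set of P3 bijectively onto itself and leaves the objective $\alpha$ unchanged. Thus if $(\lambda,\alpha)$ is optimal so is $(\sigma\lambda,\alpha)$, and by convexity of the feasible polytope so is the midpoint $\tfrac12(\lambda+\sigma\lambda)$, which is symmetric. This gives \eqref{property1}.

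\emph{The reduced LP.} Writing $\lambda_1=\lambda_{\ell_{1,0}}=\lambda_{\ell_{3,2}}$, $\lambda_2=\lambda_{\ell_{2,0}}=\lambda_{\ell_{3,1}}$, $\mu=\lambda_{\ell_{2,1}}$, $\nu=\lambda_{\ell_{1,2}}$ and $\lambda_0=\lambda_{\ell_{3,0}}$, the symmetric instance of P3 collapses to: maximize $\alpha$ subject to the beam budgets $\lambda_1+\lambda_2+\lambda_0\le 1$ (source transmit $=$ destination receive), $\lambda_2+\mu\le 1$ (relay-$1$ transmit $=$ relay-$2$ receive), $\lambda_1+\nu\le 1$ (relay-$2$ transmit $=$ relay-$1$ receive), and the three distinct cuts $\alpha\le \lambda_1\ell_1+\lambda_2\ell_2+\lambda_0\ell_4$, $\alpha\le 2\lambda_2\ell_2+\mu\ell_3+\lambda_0\ell_4$, $\alpha\le 2\lambda_1\ell_1+\nu\ell_3+\lambda_0\ell_4$, together with nonnegativity. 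From $\beta\in[0,1/2]$ and \eqref{modified_linkcap} one reads off the orderings $\ell_1\ge\ell_2\ge\ell_4$ and $\ell_3\ge\ell_2\ge\ell_4$, which will drive the remaining exchanges (note no hypothesis on the distance $d$ is needed here).

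\emph{Eliminating the backward and direct links, and tightness.} The variable $\nu$ (backward relay link $2\to1$) appears only in the third cut and the budget $\lambda_1+\nu\le 1$, where it competes with $\lambda_1$. Since the forward path $0\to1\to2\to3$, with capacities $\ell_1,\ell_3,\ell_1$, weakly dominates the backward path $0\to2\to1\to3$, with capacities $\ell_2,\ell_3,\ell_2$, link-for-link (because $\ell_1\ge\ell_2$) while using the same transmit/receive beam at each node, any scheduled flow on the backward path can be rerouted onto the forward path without increasing activation times; hence there is an optimum with $\nu=0$, after which $\lambda_1+\nu\le 1$ is implied by $\lambda_1+\lambda_2+\lambda_0\le 1$ and drops out. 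For $\lambda_0$, because $\ell_4=\min_i\ell_i$, reducing $\lambda_0$ by $\delta$ and returning the freed source/destination time as $\delta_1=\tfrac{\delta\ell_4}{2\ell_1}$ to $\lambda_1$ and $\delta_2=\tfrac{\delta\ell_4}{2\ell_2}$ to $\lambda_2$ leaves all three cuts \emph{exactly} unchanged and, since $\tfrac{\ell_4}{2\ell_1}+\tfrac{\ell_4}{2\ell_2}\le 1$, respects $\lambda_1+\lambda_2+\lambda_0\le 1$; iterating drives $\lambda_0\to 0$ at no loss of $\alpha$. With $\nu=\lambda_0=0$, tightness follows by optimality: if $\lambda_1+\lambda_2<1$ we could raise whichever of $\lambda_1,\lambda_2$ feeds the binding cut and strictly increase $\alpha$, and if $\lambda_2+\mu<1$ we could raise $\mu$ to lift the second cut; hence $\lambda_1+\lambda_2=1$ and $\lambda_2+\mu=1$, giving \eqref{property2}--\eqref{property3}.

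\emph{Main obstacle.} The delicate point is the redistribution used to force $\lambda_0=0$ (and, rigorously, $\nu=0$): each $\lambda$ sits in several cuts while the transmit/receive budgets couple the two relays, so a shift that repairs one cut can depress another or violate $\lambda_2+\mu\le 1$. The crux is to verify, using only $\ell_1\ge\ell_2\ge\ell_4$ and $\ell_3\ge\ell_2$, that the proposed directions keep every cut nondecreasing and every budget feasible simultaneously; when the relay budget $\lambda_2+\mu\le 1$ is already tight this requires a short case split on which cut is currently binding (and, as the endpoint $\mu=0,\ \lambda_2=1$ illustrates, forces $\lambda_0=0$ directly). I expect this bookkeeping to be the main work, whereas the symmetrization step is routine.
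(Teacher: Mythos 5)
Your symmetrization argument for \eqref{property1} is exactly the paper's (map an optimum through the reflection $0\leftrightarrow 3$, $1\leftrightarrow 2$ and average), and your treatment of \eqref{property2} (reroute the $2\to 1$ flow and then saturate the relay budget) is essentially the paper's as well; your reduced symmetric LP is also correctly derived. The problem is property \eqref{property3}, and specifically $\lambda_{\ell_{3,0}}=0$, where the gap you flag as ``the main work'' is not a bookkeeping issue but the actual mathematical content of the lemma. Your redistribution $\lambda_0\mapsto\lambda_0-\delta$, $\lambda_1\mapsto\lambda_1+\delta\ell_4/(2\ell_1)$, $\lambda_2\mapsto\lambda_2+\delta\ell_4/(2\ell_2)$ does leave all three cuts unchanged, but it increases $\lambda_2$ and therefore fails whenever $\lambda_2+\mu\le 1$ is tight --- and at an optimum in which cut $({\rm P4}f)$ is binding, that budget is tight. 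In that situation no exchange based only on the orderings $\ell_1\ge\ell_2\ge\ell_4$, $\ell_3\ge\ell_2$ can succeed: from \eqref{modified_linkcap} one checks that $2\ell_2-\ell_3<\ell_4$ for every $\beta\in(0,1/2)$ (it reduces to $1-2\beta<(1-\beta)^2$), so with the relay budget saturated, a unit of source transmit time spent on the direct link contributes $\ell_4$ to cut $({\rm P4}f)$ while a unit moved to $\lambda_2$ contributes only $2\ell_2-\ell_3$ net of the forced decrease of $\mu$. Hence eliminating $\lambda_0$ strictly lowers the binding cut unless a further quantitative inequality holds, and your parenthetical claim that ``no hypothesis on the distance $d$ is needed here'' is precisely where the argument breaks.

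The paper's proof of this step confirms that the difficulty is real: it splits on $\ell_3\ge\ell_1$ (i.e.\ $\beta\ge 1/3$), where a rerouting of the direct flow through $0\to1\to2\to3$ works because the freed $\lambda_{\ell_{3,0}}$ can be absorbed by $\lambda_1$ and the relay link is fast enough, versus $\ell_3<\ell_1$, where it abandons local exchanges entirely and instead uses the path-based LP of~\cite{ezzeldin}: it exhibits the explicit point $x=\bigl(0,\tfrac{\ell_1-\ell_3}{\ell_1+\ell_2-\ell_3},\tfrac{\ell_2}{\ell_1+\ell_2-\ell_3}\bigr)$ and certifies its optimality through the dual/KKT conditions, which are feasible only when $\ell_1(2\ell_2-\ell_3)/(\ell_1+\ell_2-\ell_3)\ge\ell_{3,0}$, an inequality verified using the hypothesis $\gamma/d^a>3^a$ of Theorem~\ref{main_theorem}. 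That condition can fail for small $\gamma/d^a$ with $\beta$ near $1/3$, in which case the optimal schedule genuinely activates the direct link; so property \eqref{property3} (and hence the lemma, which implicitly inherits the theorem's distance hypothesis) is not an unconditional structural fact about the LP. To repair your proof you would need to import the distance condition and supply either the paper's duality certificate or an equivalent global argument for the regime $\beta<1/3$; the local exchange alone cannot close this case.
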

\begin{proof}
The proof is delegated to Appendix \ref{symmetry_properties}.
\end{proof}

Based on Lemma~\ref{properties} and the definition of link capacities in~\eqref{modified_linkcap}, it follows that LP P3 is equivalent to the following LP.
\begin{align}\label{capacity_lp_cut_N_2_P4}
& \ \rm{P4:}\ \widebar{\msf{C}} = \max_{\lambda, \alpha} \alpha  \nonumber \\
&({\rm P4}a) \ \lambda_1 + \lambda_2 = 1,  \nonumber\\
&({\rm P4}b) \ \lambda_2 + \lambda_3 = 1, \nonumber\\
&({\rm P4}c) \ \lambda_1, \lambda_2, \lambda_3 \geq 0,\\
&({\rm P4}d) \ \alpha \leq \lambda_{2}\ell_2+\lambda_{1}\ell_{1} =\lambda_{2}\ell_2+(1-\lambda_2)\ell_1,  \nonumber \\
&({\rm P4}e) \ \alpha \leq \lambda_{1}\ell_{1}+\lambda_{2}\ell_{2} = (1-\lambda_2)\ell_1+\lambda_2\ell_2,  \nonumber \\
&({\rm P4}f) \ \alpha \leq \lambda_{3}\ell_{3}+2\lambda_{2}\ell_{2} = (1-\lambda_2)\ell_3+2\lambda_2\ell_2,  \nonumber\\
&({\rm P4}g) \ \alpha \leq 2\lambda_{1}\ell_{1} = 2(1-\lambda_2)\ell_1,  \nonumber
\end{align}
where $\ell_1$, $\ell_2$ and $\ell_3$ are defined as in~\eqref{modified_linkcap}, and $\lambda_3 = \lambda_{\ell_{2,1}}$. 

\medskip

\noindent{\bf [4. Proving Theorem~\ref{main_theorem} for different ranges of $\beta$]} The LP P4 will be the central starting point for the remainder of the proof. We will consider further sub-classes of our projected topologies depending on the value of $\beta \in (0,1/2]$. In particular, we consider the following two categories:
\begin{equation}\label{categories}
\begin{aligned}
{\rm\bf (Category\ 1)} &:\quad  \frac{d}{\gamma^{\frac{1}{a}}} < \beta \leq 1/2,\\
{\rm\bf (Category\ 2)} &:\quad 0 < \beta \leq \frac{d}{\gamma^{\frac{1}{a}}}.\\
\end{aligned}
\end{equation}

For each of these two categories, we will follow a different approach in order to show that there does not exist a feasible solution for LP P4 such that its objective function is greater than or equal to $\widebar{\msf{C}}^\star = 2\log\left(\gamma/d^a\right)$. 

The details of the proof for each category are described in the following subsections.

\subsection{Proof for Category 1}
To prove that Theorem~\ref{main_theorem} for Category 1 of $\beta$ values in~\eqref{categories}, it is sufficient to show that for $\beta > d/(\gamma^{1/a})$, the RHS of $({\rm P4}d)$ in~\eqref{capacity_lp_cut_N_2_P4} is smaller than $\widebar{\msf{C}}^\star$.

We prove this by contradiction. Let us assume that there does exist a value of $\beta > d/(\gamma^{1/a})$ and $\lambda_2 \in [0,1]$ such that the RHS of $({\rm P4}d)$ is greater than or equal to $\widebar{\msf{C}}^\star$, i.e., $\exists \beta$ such that
\begin{align}
&\widebar{\msf{C}}^\star \leq \text{RHS $({\rm P4}d)$ } = \lambda_{2}\ell_2+(1-\lambda_2)\ell_1 \nonumber \\
&= \lambda_{2}\log\left(\frac{\gamma}{(1-\beta)^ad^a}\right) + (1-\lambda_2)\log\left(\frac{\gamma}{\beta^ad^a}\right) \nonumber \\
&= \log\left(\frac{\gamma}{d^a}\right)-\lambda_2\log\left((1-\beta)^a\right)-(1-\lambda_2)\log\left(\beta^a\right),
\end{align}
With rearrangement of terms (recall the definition of $\widebar{\msf{C}}^\star$ in Corollary~\ref{optimum_cap}), and exploiting the fact that the logarithm function is monotonically increasing, the above constraint can be equivalently rewritten as
\begin{equation}\label{contradiction_eq1}
\begin{aligned}
 d \geq \underbrace{\gamma^{\frac{1}{a}}\beta(1-\beta)^{\lambda_2}\beta^{-\lambda_2}}_{g(\lambda_2)}.
\end{aligned}
\end{equation}
From basic calculus, it is not difficult to verify that the derivative of the function $g(\lambda_2)$ is non-negative, for all $\beta \in (0,1/2]$, and thus $g(\lambda_2)$ is monotonically increasing in $\lambda_2$. Additionally, since $\lambda_2$ is a non-negative variable in LP P4 (see the constraint $({\rm P4}c)$), then~\eqref{contradiction_eq1} would imply that
\begin{equation}
\begin{aligned}
d \geq g(0) = \gamma^{\frac{1}{a}}\beta, 
\end{aligned}
\end{equation}
which contradicts our assumption that $\beta > d/(\gamma^{1/a})$.

Thus, we conclude that for all values of $\beta > d/(\gamma^{1/a})$, there does not exist a feasible solution in LP P4, such that the RHS of $({\rm P4}d)$ is greater than or equal to $\widebar{\msf{C}}^\star$. Thus, we have that
\[
\widebar{\msf{C}} \leq \max_{\lambda_2} \text{RHS $({\rm P4}d)$} < \widebar{\msf{C}}^\star.
\]
The concludes the proof for Category 1.

\subsection{Proof for Category 2}
To prove Theorem~\ref{main_theorem} for Category 2\textbf{} topologies, we aim to show that for $0 < \beta \leq d / \gamma^{\frac{1}{a}}$, any feasible solution for P4 will have either the RHS of $({\rm P4}d)$ or the RHS of $({\rm P4}f)$ less than $\widebar{\msf{C}}^\star$. This statement is summarized by the following lemma which is proved in Appendix~\ref{appendix_third_range}.
\begin{lemma}\label{third_range}
If $\frac{\gamma}{d^a} > 3^a$ and $0< \beta \leq d / \gamma^{\frac{1}{a}}$, then $\forall \lambda_2 \in [0,1]$, we have that
\begin{equation}
\begin{aligned}
\widebar{\msf{C}}^\star > \min\left\{\underbrace{\lambda_{2}\ell_2+(1-\lambda_2)\ell_1}_{\text{RHS $({\rm P4}d)$}}, \underbrace{(1-\lambda_2)\ell_3+2\lambda_2\ell_2}_{\text{RHS (${\rm P4}f)$}}\right\}
\end{aligned}
\end{equation}
where $\widebar{\msf{C}}^\star$ is given in Corollary~\ref{optimum_cap} and $\ell_1$, $\ell_2$ and $\ell_3$ are as in~\eqref{modified_linkcap}.
\end{lemma}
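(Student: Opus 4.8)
The plan is to exploit the affine structure of the two relevant cut values in $\lambda_2$ and reduce the whole lemma to a single scalar inequality. First I would record the consequences of the two hypotheses: since $\gamma/d^a>3^a$ gives $\gamma^{1/a}/d>3$, the Category-2 bound $\beta\le d/\gamma^{1/a}$ forces $\beta<1/3$, so that $1-\beta>2/3$ and $1-2\beta>1/3$ stay bounded away from $0$. Writing $M=\tfrac1a\log(\gamma/d^a)$ (so $M>\log 3$ and $\widebar{\msf{C}}^\star=2aM$), the Category-2 condition $\beta\le d/\gamma^{1/a}$ is exactly $\log(1/\beta)\ge M$. Using the link capacities in \eqref{modified_linkcap}, I would write the two quantities in the lemma as affine functions $A(\lambda_2)=\mathrm{RHS}\,({\rm P4}d)=\ell_1+\lambda_2(\ell_2-\ell_1)$ and $B(\lambda_2)=\mathrm{RHS}\,({\rm P4}f)=\ell_3+\lambda_2(2\ell_2-\ell_3)$. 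Because $\beta<1-\beta$ we have $\ell_2<\ell_1$, so $A$ is strictly decreasing; and a short check that $(1-\beta)^2>1-2\beta$ gives $2\ell_2>\ell_3$, so $B$ is strictly increasing.

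Next I would pin down the endpoints against the target level $\widebar{\msf{C}}^\star$. The value $A(0)=\ell_1\ge\widebar{\msf{C}}^\star$ is precisely the Category-2 condition $\log(1/\beta)\ge M$, while $A(1)=\ell_2<\widebar{\msf{C}}^\star$, $B(0)=\ell_3<\widebar{\msf{C}}^\star$, and $B(1)=2\ell_2>\widebar{\msf{C}}^\star$ all follow immediately from $\beta<1/3$, which keeps $1-\beta,1-2\beta$ away from $0$ and hence $\ell_2,\ell_3$ below $\widebar{\msf{C}}^\star$. Thus each of the two monotone affine maps crosses the level $\widebar{\msf{C}}^\star$ exactly once on $[0,1]$. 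The assertion $\widebar{\msf{C}}^\star>\min\{A(\lambda_2),B(\lambda_2)\}$ for every $\lambda_2$ is equivalent to saying that $A$ and $B$ are never simultaneously $\ge\widebar{\msf{C}}^\star$; by monotonicity this reduces to showing that, at the unique point $\lambda_2^\star$ with $A(\lambda_2^\star)=\widebar{\msf{C}}^\star$, one has $B(\lambda_2^\star)<\widebar{\msf{C}}^\star$. Indeed, $B$ increasing then gives $B(\lambda_2)\le B(\lambda_2^\star)<\widebar{\msf{C}}^\star$ for $\lambda_2\le\lambda_2^\star$, while $A$ decreasing gives $A(\lambda_2)<\widebar{\msf{C}}^\star$ for $\lambda_2>\lambda_2^\star$, covering all of $[0,1]$.

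Solving $A(\lambda_2^\star)=\widebar{\msf{C}}^\star$ yields $\lambda_2^\star=\dfrac{\log(1/\beta)-M}{\log(1/\beta)-\log(1/(1-\beta))}$; substituting into $B(\lambda_2^\star)<\widebar{\msf{C}}^\star$ and clearing the positive denominator reduces the entire lemma to the single scalar inequality
\[
(M-v)(M-w) > 2\,v\,(u-M), \qquad u=\log\tfrac1\beta,\ v=\log\tfrac1{1-\beta},\ w=\log\tfrac1{1-2\beta},
\]
to be verified for every fixed $M>\log 3$ and every $\beta\in(0,e^{-M}]$ (so that $u\ge M$). I expect this last inequality to be the main obstacle. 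Its delicacy is that in the regime $M\downarrow\log 3$ with $\beta$ close to $1/3$ both sides tend to $0$, so one cannot bound them independently and must instead use the coupling through $\beta$. The plan here is to dominate the right-hand side by $2vu=2\log\tfrac1{1-\beta}\log\tfrac1\beta$ and apply the elementary estimates $v\le\beta/(1-\beta)$ and $w\le 2\beta/(1-2\beta)$ together with $u\ge M$ and $\beta<1/3$, turning the statement into a one-variable estimate in $\beta$ (with $M$ a parameter) that can be closed by monotonicity/convexity of the logarithm — noting as a sanity check that at the endpoint $\beta=e^{-M}$ the right-hand side vanishes (since $u=M$), so the inequality is slack there, while as $\beta\to0$ the left-hand side tends to $M^2$ and the right-hand side to $0$.
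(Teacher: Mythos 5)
Your setup is sound and, after your change of variables, lands on exactly the same scalar inequality as the paper: $(M-v)(M-w)>2v(u-M)$ is, up to a factor of $a^2$, precisely the paper's condition $f(\beta)<0$ in~\eqref{third_range_condition}, and your reduction via the monotonicity of the two affine maps $A$ and $B$ is equivalent to the paper's comparison of the two thresholds~\eqref{third_range_firstbound} and~\eqref{third_range_thirdbound}. One small slip along the way: $(1-\beta)^2>1-2\beta$ by itself does \emph{not} give $2\ell_2>\ell_3$ --- it shows $2v<w$, which pushes in the wrong direction; what you actually need is $\gamma/d^a>\bigl((1-\beta)^2/(1-2\beta)\bigr)^a$, which holds because $(1-\beta)^2/(1-2\beta)\le 4/3<3$ for $\beta<1/3$.

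The genuine gap is in your plan for the final scalar inequality. You correctly observe that both sides vanish as $M\downarrow\log 3$, $\beta\uparrow 1/3$, so the two sides cannot be bounded independently --- and then your proposed first move, replacing $2v(u-M)$ by $2vu$, does exactly that: it discards the factor $u-M$, which is the quantity that tends to $0$ at the binding corner and is the only reason the inequality survives there. The dominated inequality $(M-v)(M-w)>2vu$ is simply false in that regime: at $\beta=1/3$, $M=\log 3$ one has $M=w$, so the left side is $0$ while $2vu=2\log(3/2)\log 3\approx 0.89$; by continuity it remains false for $\beta$ slightly below $1/3$ and $M$ slightly above $\log 3$, which are admissible under the hypotheses. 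The further bounds $v\le\beta/(1-\beta)$ and $w\le 2\beta/(1-2\beta)$ only make matters worse (near $\beta=1/3$ they render the surrogate left-hand side negative). So the step you yourself flag as the main obstacle cannot be closed along the proposed route. The paper closes it differently: it regards $f(\beta)$ (equivalently, $2v(u-M)-(M-v)(M-w)$) as a function of $\beta$ on $(0,\,d/\gamma^{1/a}]$, verifies the sign at the right endpoint $\beta=d/\gamma^{1/a}$ --- where $u=M$ kills the right-hand side, which is exactly your ``sanity check'' --- and then shows via~\eqref{eq:derivative_f_beta} that $df/d\beta>0$ on the whole interval, so the endpoint is the worst case. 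You would need to replace your domination step with such a monotonicity-in-$\beta$ argument, or some other argument that retains the $u-M$ factor throughout.
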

The result for Category 2 is the direct consequence of Lemma~\ref{third_range} since the approximate capacity in P4 is equal to the minimum of RHS of the constraints $({\rm P4}d-{\rm P4}g)$.
This concludes the proof of Theorem \ref{main_theorem}.

\section{Conclusion}\label{sec:concl}
In this paper, we studied the optimization of relay placement in a 1-2-1 FD network with $2$ relay nodes in order to maximize the approximate capacity of the network. Through our analysis, we observed that the approximate capacity is maximized by placing the relay nodes as close as possible to the source and the destination, respectively. This serves as a first step in understanding optimal topology design in directional networks and can pave the way towards understanding and optimizing relay placement for more complex directional networks.
\bibliographystyle{IEEEtran}
\bibliography{bibliography}

\appendices
\section{Proof of Lemma~\ref{symmetric_network_lemma}}\label{asymm_symm}
Here, we prove that for an asymmetric network as in Fig.~\ref{projected_network}, we can find a symmetric network that gives at least the same capacity. The proof consists of three main steps. 

In the first step, we prove that the following result holds in asymmetric networks.
\begin{equation}\label{eq1_asymm}
\begin{aligned}
\lambda_{\ell_{3,1}}+\lambda_{\ell_{2,1}} = 1,\\
\lambda_{\ell_{2,0}}+\lambda_{\ell_{2,1}} = 1.
\end{aligned}
\end{equation}

As it was described in the first step of the proof outline, we define the distances between the nodes in the network using fractions $\beta_1$ and $\beta_2$ of the distance $d$ between the source and destination where $\beta,~\beta_1 \in [0,1]$. In the second step, we prove that when $\beta_1 < \beta_2$, $F_{3,1} < F_{2,0}$ in the optimal solution and through using this property, we show that there exists a symmetric network that gives at least the same capacity as the asymmetric network. 

In the third step, we prove that when $\beta_1 > \beta_2$, $F_{3,1} > F_{2,0}$ in the optimal solution and through leveraging this property, we again show that there exists a symmetric network that achieves at least the same capacity as the asymmetric network.

We start with the first step by considering an optimal solution where both $\lambda_{\ell_{3,1}}+\lambda_{\ell_{2,1}} < 1$ and $\lambda_{\ell_{2,0}}+\lambda_{\ell_{2,1}} < 1$. In this case, we can increase $\lambda_{\ell_{2,1}}$ until one of these summations becomes equal to $1$ while fixing all other variables from the optimal solution. This new set of variables satisfies the constraints in LP P1 and the same capacity is achieved. Therefore, there exists an optimal solution where one of these summations is equal to $1$. Now, consider the case where an optimal solution satisfies $\lambda_{\ell_{3,1}}+\lambda_{\ell_{2,1}} = 1$ and $\lambda_{\ell_{2,0}}+\lambda_{\ell_{2,1}} < 1$. We can represent these activation times as follows.
\begin{equation}\label{equation_appendix_a}
\begin{aligned}
\lambda_{\ell_{3,1}} = \lambda,\\
\lambda_{\ell_{2,1}} = 1-\lambda,\\
\lambda_{\ell_{2,0}} < \lambda.
\end{aligned}
\end{equation}

We note that in the optimal solution, $\lambda_{\ell_{1,2}} = 0$ since there exists an optimal solution where $F_{1,2} = 0$. Consider an optimal solution where $F_{1,2} \neq 0$. Due to the constraint $({\rm P1}b)$ in \eqref{capacity_lp} for the second relay,
\[
F_{1,2} = F_{2,1} + F_{2,0} - F_{3,2},
\]
While keeping other variables same, we can decrease $F_{1,2}$ to zero by reducing $F_{2,1} + F_{2,0}$ by the same amount without affecting $F_{3,2}$. In this case, we obtain another feasible solution that achieves the same capacity.

Now, for the activation times given in \eqref{equation_appendix_a}, there are two possible cases: $\lambda_{\ell_{30}} \geq \lambda-\lambda_{\ell_{2,0}}$ or $\lambda_{\ell_{30}} < \lambda-\lambda_{\ell_{2,0}}$.

In the first case, we can perform the following modifications on the activation times.
\begin{equation}\label{eq2_asymm}
\begin{aligned}
\lambda'_{\ell_{2,0}} = \lambda,\qquad \lambda'_{\ell_{3,0}} = \lambda_{\ell_{3,0}}-(\lambda-\lambda_{\ell_{2,0}}),\\
\lambda'_{\ell_{1,0}} = \lambda_{\ell_{1,0}}, \qquad \lambda'_{\ell_{3,2}} = \lambda_{\ell_{3,2}}+(\lambda-\lambda_{\ell{2,0}}),\\
\lambda'_{\ell_{3,1}} = \lambda, \qquad \lambda'_{\ell_{1,2}} = 0, \qquad \lambda'_{\ell_{2,1}} = 1-\lambda.
\end{aligned}
\end{equation}
This new set of activation times satisfy the constraints in LP P1. If $F_{3,0} \leq \lambda'_{\ell_{3,0}}\ell_{3,0}$, we can send the same flows and reach the same capacity. If $F_{3,0} > \lambda'_{\ell_{3,0}}\ell_{3,0}$, we can modify the flows as follows.
\begin{equation}
\begin{aligned}
    F'_{3,0} =  \lambda'_{\ell_{3,0}}\ell_{3,0},\\
    F'_{2,0} = F_{2,0}+F_{3,0}- \lambda'_{\ell_{3,0}}\ell_{3,0},\\
    F'_{3,2} = F_{3,2}+F_{3,0}- \lambda'_{\ell_{3,0}}\ell_{3,0}.
\end{aligned}
\end{equation}
The remaining flows stay same and the same capacity in LP P1 is achieved. We note that since the decrease in $\lambda_{\ell_{3,0}}$ is equal to the increase in $\lambda_{\ell_{2,0}}$ and $\lambda_{\ell_{3,2}}$, and the link capacity $\ell_{3,0}$ is the smallest link capacity in the network, these modified flows still satisfy the capacity constraints $({\rm P1}a)$ in LP P1.

In the second case, we have $\lambda_{\ell_{3,0}} < \lambda-\lambda_{\ell_{2,0}}$. Therefore, we can modify the activation times as follows.
\begin{equation}
\begin{aligned}
\lambda'_{\ell_{2,0}} = \lambda_{\ell_{2,0}}+\lambda_{\ell_{3,0}},\\
\lambda'_{\ell_{3,1}} = \lambda_{\ell_{2,0}}+\lambda_{\ell_{3,0}},\\
\lambda'_{\ell_{3,2}} = \lambda_{\ell_{3,2}}+\lambda_{\ell_{3,0}}+\lambda_{\ell_{3,1}}-\lambda'_{\ell_{3,1}},\\ \lambda'_{\ell_{1,0}} = \lambda_{\ell_{1,0}} \qquad
\lambda'_{\ell_{3,0}} = 0,\\ \lambda'_{\ell_{1,2}} = 0 \qquad \lambda'_{\ell_{2,1}} = 1-\lambda'_{\ell_{3,1}}.
\end{aligned}
\end{equation}
If $F_{3,0} \leq \lambda'_{\ell_{3,0}}\ell_{3,0}$ and $F_{3,1} \leq \lambda'_{\ell_{3,1}}\ell_{3,1}$, we can use the same flow variables and achieve the same capacity. Otherwise, the flows through the paths $p_1$ and/or $p_2$ need to decrease, respectively where $p_1$ is the direct path between the source and destination and $p_2$ is the path going from the source to the destination while passing through relay $1$. However, we can compensate the decrease in paths $p_1$ and $p_2$ by sending additional flows through the paths $p_3$ and $p_4$, respectively where $p_3$ is the path that goes from the source to the destination while passing through relay $2$ and $p_4$ is the path that goes from the source to the destination while passing through both relays. For example, if we have $F_{3,0} > \lambda'_{\ell_{3,0}}\ell_{3,0}$ and $F_{3,1} > \lambda'_{\ell_{3,1}}\ell_{3,1}$, we can modify the flows as follows.
\begin{equation}
\begin{aligned}
F'_{3,0} = 0, \qquad F'_{3,1} = \lambda'_{\ell_{3,1}}\ell_{3,1},\\
F'_{1,0} = F_{1,0},\qquad F'_{2,0} = F_{2,0}+F_{3,0},\\
F'_{1,2} = 0, \qquad F'_{2,1} = F_{2,1}+F_{3,1}-F'_{3,1}\\
F'_{3,2} = F_{3,2}+F_{3,0}+F_{3,1}-F'_{3,1}
\end{aligned}
\end{equation}
This new set of variables satisfies the constraints in LP P1 and achieve the same capacity. Therefore, in one of the optimal solutions, the equalities in \eqref{eq1_asymm} holds. We can perform similar steps for the case $\lambda_{\ell_{3,1}}+\lambda_{\ell_{2,1}} < 1$ and $\lambda_{\ell_{2,0}}+\lambda_{\ell_{2,1}} = 1$, and arrive the same conclusion. Therefore, in asymmetric networks, the following holds. 
\begin{equation}\label{first_step_asymm}
\begin{aligned}
\lambda_{\ell_{3,1}} = \lambda_{\ell_{2,0}}.
\end{aligned}
\end{equation}
This completes the proof of the first step.

In the second step, we leverage the property in \eqref{first_step_asymm} to show that $F_{3,1} < F_{2,0}$ in the optimal solution when $\beta_1 < \beta_2$ in an asymmetric network. We represent the activation times as follows.
\begin{equation*}
\begin{aligned}
\lambda_{\ell_{3,1}} = \lambda_{\ell_{2,0}} = \lambda,\\
\lambda_{\ell_{2,1}} = 1-\lambda,\\
\lambda_{\ell_{1,2}} = 0.
\end{aligned}
\end{equation*}
It is not difficult to see that in the optimal solution, $\lambda_{\ell_{3,0}}+\lambda_{\ell_{3,1}}+\lambda_{\ell_{3,2}} = 1$ and $\lambda_{\ell_{3,0}}+\lambda_{\ell_{2,0}}+\lambda_{\ell_{1,0}} = 1$. Consider there exists an optimal solution such that $\lambda_{\ell_{3,0}}+\lambda_{\ell_{3,1}}+\lambda_{\ell_{3,2}} < 1$ and $\lambda_{\ell_{3,0}}+\lambda_{\ell_{2,0}}+\lambda_{\ell_{1,0}} < 1$. While fixing other variables, if we increase $\lambda_{\ell_{3,2}}$ and $\lambda_{\ell_{1,0}}$ until the summations become equal to $1$, this will give another feasible solution that achieves the same capacity. As a result, there exists an optimal solution where $\lambda_{\ell_{3,0}}+\lambda_{\ell_{3,1}}+\lambda_{\ell_{3,2}} = 1$ and $\lambda_{\ell_{3,0}}+\lambda_{\ell_{2,0}}+\lambda_{\ell_{1,0}} = 1$. Therefore, 
\begin{equation*}
\lambda_{\ell_{3,2}} = \lambda_{\ell_{1,0}} = 1-\lambda-\lambda_{\ell_{3,0}}.
\end{equation*}
In order to show that $F_{3,1} < F_{2,0}$ in the optimal solution, we will show that $F_{2,0}$ satisfies the capacity constraint $({\rm P1}a)$ in \eqref{capacity_lp} with equality in the optimal solution, i.e., $F_{2,0} = \lambda\ell_{2,0}$. Since $\ell_{2,0} > \ell_{3,1}$ and $F_{3,1} \leq \lambda\ell_{3,1}$, this guarantees that $F_{3,1} < F_{2,0}$ in the optimal solution.

We start with considering an optimal solution where $F_{2,0} < \lambda\ell_{2,0}$. In order to prove the argument, we will evaluate two cases. In the first case, $F_{3,2}+b_1 \leq (1-\lambda-\lambda_{\ell_{3,0}})\ell_{3,2}$ and in the second case, $F_{3,2}+b_1 > (1-\lambda-\lambda_{\ell_{3,0}})\ell_{3,2}$ where $b_1 = \lambda\ell_{2,0}-F_{2,0}$.

In the first case, we can use contradiction to prove the argument. Consider $F_{2,0} < \lambda\ell_{2,0}$ in the optimal solution. We can modify the flow variables as follows.
\begin{equation}
\begin{aligned}
F'_{2,0} = \lambda\ell_{2,0},\\
F'_{3,2} = F_{3,2}+b_1.
\end{aligned}
\end{equation}
where $b_1 = \lambda\ell_{2,0}-F_{2,0}$ and the remaining variables stay same. 

With this new set of variables, we have another feasible solution that achieves a higher capacity than the optimal solution since $F_{3,2}$ increases, $F_{3,1}$ and $F_{3,0}$ stay same. This creates a contradiction, therefore, we argue that an optimal solution satisfies $F_{2,0} = \lambda\ell_{2,0}$ if $F'_{3,2}$ satisfies the capacity constraint $({\rm P1}a)$ in \eqref{capacity_lp}, i.e.,
\begin{equation*}
F_{3,2}+b_1 \leq (1-\lambda-\lambda_{\ell_{3,0}})\ell_{3,2}.
\end{equation*}

In the second case, we have $F_{3,2}+b_1 > (1-\lambda-\lambda_{\ell_{3,0}})\ell_{3,2}$. We can decrease $\lambda$ to $\lambda'_1$ where $\lambda'_1$ satisfies 
\begin{equation}\label{eq1_second_step_asymm}
\begin{aligned}
F_{3,2}+b'_1 = (1-\lambda'_1-\lambda_{\ell_{3,0}})\ell_{3,2},\\
\implies \lambda'_1 = \frac{\ell_{3,2}-\lambda_{\ell_{3,0}}\ell_{3,2}+F_{2,0}-F_{3,2}}{\ell_{2,0}+\ell_{3,2}}.
\end{aligned}
\end{equation}
where $b'_1 = \lambda'_1\ell_{2,0}-F_{2,0} > 0$. We should note that $\lambda'_1$ in \eqref{eq1_second_step_asymm} is a feasible activation time since  $F_{3,2} \leq (1-\lambda-\lambda_{\ell_{3,0}})\ell_{3,2}$. Now, we need to consider two cases: $F_{3,1} \leq \lambda'_1\ell_{3,1}$ and $F_{3,1} > \lambda'_1\ell_{3,1}$.

In the first case, we can use contradiction to prove our argument. First, we modify the flow variables as follows.
\begin{equation}
\begin{aligned}
F'_{2,0} = \lambda'_1\ell_{2,0},\\
F'_{3,2} = F_{3,2}+b'_1.
\end{aligned}
\end{equation}
When the remaining flow variables stay same, this new set of variables is a feasible solution and achieves a higher capacity than the optimal solution since $F_{3,2}$ increases, $F_{3,1}$ and $F_{3,0}$ stay same. Therefore, this creates a contradiction and we argue that $F_{2,0}$ satisfies capacity constraint $({\rm P1}a)$ in LP P1 in an optimal solution if $F_{3,1} \leq \lambda'_1\ell_{3,1}$.

In the second case ($F_{3,1} > \lambda'_1\ell_{3,1}$), we modify the flow variables as follows.
\begin{equation}
\begin{aligned}
    F'_{3,1} = \lambda'_1\ell_{3,1},\\
    F'_{2,0} = \lambda'_1\ell_{2,0},\\
    F'_{1,0} = F_{2,1}+F'_{3,1},\\
    F'_{3,2} = F_{3,2}+b'_1.
\end{aligned}
\end{equation}
The remaining flow variables stay same. This new set of variables still satisfies the constraints in LP P1 and achieves a higher capacity since the increase in $F_{3,2}$ is higher than the decrease in $F_{3,1}$.
\begin{equation}
\begin{aligned}
F_{3,1}-F'_{3,1} \leq (\lambda-\lambda'_1)\ell_{3,1}\\
F'_{3,2}-F_{3,2} \geq (\lambda-\lambda'_1)\ell_{3,2}
\end{aligned}
\end{equation}
Since $\ell_{3,2} > \ell_{3,1}$, this new set of variables achieve a higher capacity and again, this creates a contradiction. Therefore, we argue that in an optimal solution, $F_{2,0}$ satisfies the capacity constraint with equality. As a result, we claim that
\begin{equation}\label{eq2_second_step_asymm}
F_{3,1} < F_{2,0} \qquad \text{if}~\beta_1 < \beta_2
\end{equation}
Through using the property in \eqref{eq2_second_step_asymm}, we can prove that there exists a symmetric network that achieves at least the same capacity as the asymmetric network. Consider we have an asymmetric network as in Fig.~\ref{projected_network} and $\beta_1 < \beta_2$. Now, we bring relay $1$ closer to relay $2$ such that $\beta'_1 = \beta_2$. We next show that with this symmetric network, we can achieve the same capacity. Towards this end, we first write the relationships between variables in an optimal solution for the asymmetric network based on constraints in LP P1, proofs given above and the property in \eqref{eq2_second_step_asymm}.
\begin{equation}
\begin{aligned}
\lambda_{\ell_{3,1}} = \lambda_{\ell_{2,0}} = \lambda,\\
\lambda_{\ell_{1,0}} = \lambda_{\ell_{3,2}} = 1-\lambda-\lambda_{\ell_{3,0}},\\
\lambda_{\ell_{2,1}} = 1-\lambda,\\
\lambda_{\ell_{1,2}} = 0.
\end{aligned}
\end{equation}
\begin{equation}\label{eq3_second_step_asymm}
\begin{aligned}
F_{3,1} < F_{2,0} = \lambda\ell_{2,0},\\
F_{1,0} = F_{3,1}+F_{2,1} < F_{3,2}, \\
F_{3,2}= F_{2,0}+F_{2,1} \leq (1-\lambda-\lambda_{\ell_{3,0}})\ell_{3,2}.
\end{aligned}
\end{equation}
Now, we bring relay $1$ closer to relay $2$ such that $\beta'_1 = \beta_2$. In this case, the distances between nodes are modified as follows.
\begin{equation*}
\begin{aligned}
d'_{3,1} = d_{2,0},\\
d'_{1,0} = d_{3,2},\\
d'_{2,1} = d-2\beta_2d.
\end{aligned}
\end{equation*}
Through the path loss model in \eqref{channel_model}, 
\begin{equation*}
\begin{aligned}
\ell'_{3,1} = \ell_{2,0},\\
\ell'_{1,0} = \ell_{3,2},\\
\ell'_{2,1} > \ell_{2,1}.
\end{aligned}
\end{equation*}
Now, we use the same activation times $\lambda_{\ell_{j,i}}$ as in the asymmetric network. Thus, we have the following capacity constraints on $F_{3,1}$ and $F_{1,0}$ in the symmetric network.
\begin{equation}
\begin{aligned}
F_{3,1} \leq \lambda\ell_{2,0}, \\
F_{1,0} \leq (1-\lambda-\lambda_{\ell_{3,0}})\ell_{3,2},\\
F_{2,1} \leq (1-\lambda)\ell'_{2,1}.
\end{aligned}
\end{equation}
These capacity constraints are satisfied due to the relationship given in \eqref{eq3_second_step_asymm}. Therefore, we can use the same activation times and flow variables in this symmetric network, and this gives a feasible solution that achieves the same capacity as the asymmetric network. This concludes the proof for the case where $\beta_1 < \beta_2$.

In the third step, we show that $F_{3,1} > F_{2,0}$ in the optimal solution of an asymmetric network when $\beta_1 > \beta_2$. Then, by leveraging this property, we show that there exists a symmetric network that achieves at least the same capacity as the asymmetric network. 

We first represent the optimal activation times in an asymmetric network as follows based on our result in \eqref{first_step_asymm}.
\begin{equation}
\begin{aligned}
\lambda_{\ell_{3,1}} = \lambda_{\ell_{2,0}} = \lambda,\\
\lambda_{\ell_{1,0}} = \lambda_{\ell_{3,2}} = 1-\lambda-\lambda_{\ell_{3,0}},\\
\lambda_{\ell_{2,1}} = 1-\lambda,\\
\lambda_{\ell_{1,2}} = 0.
\end{aligned}
\end{equation}
In order to show that $F_{3,1} > F_{2,0}$ in the optimal solution, we will show that $F_{3,1}$ satisfies the capacity constraint $({\rm P1}a)$ in LP P1 with equality in the optimal solution, i.e., $F_{3,1} = \lambda\ell_{3,1}$. Since $\ell_{2,0} < \ell_{3,1}$ and $F_{2,0} \leq \lambda\ell_{2,0}$, this guarantees that $F_{2,0} < F_{3,1}$ in the optimal solution. 

Consider there is an optimal solution where $F_{3,1} < \lambda\ell_{3,1}$. Through using contradiction, we show that $F_{3,1} = \lambda\ell_{3,1}$. Now, we consider two cases: $F_{1,0}+b_2 \leq (1-\lambda-\lambda_{\ell_{3,0}})\ell_{1,0}$ and
$F_{1,0}+b_2 > (1-\lambda-\lambda_{\ell_{3,0}})\ell_{1,0}$ where $b_2 = \lambda\ell_{3,1}-F_{3,1}$.

In the first case, we modify the flow variables as follows.
\begin{equation}
\begin{aligned}
F'_{3,1} = \lambda\ell_{3,1},\\
F'_{1,0} = F_{1,0}+b_2.
\end{aligned}
\end{equation}
The remaining variables stay same. This new set of variables is still feasible in LP P1 and it achieves a higher capacity than the optimal solution due to increase in $F_{3,1}$ and $F_{1,0}$. Since this creates a contradiction, we argue that an optimal solution satisfies $F_{3,1} = \lambda\ell_{3,1}$ if $F'_{1,0}$ satisfies the capacity constraint, i.e., $F_{1,0}+b_2 \leq (1-\lambda-\lambda_{\ell_{3,0}})\ell_{1,0}$.

In the second case, we have $F_{1,0}+b_2 > (1-\lambda-\lambda_{\ell_{3,0}})\ell_{1,0}$. We can decrease $\lambda$ to $\lambda'_2$ where $\lambda'_2$ satisfies 
\begin{equation}\label{eq1_third_step_asymm}
\begin{aligned}
F_{1,0}+b'_2 = (1-\lambda'_2-\lambda_{\ell_{3,0}})\ell_{1,0},\\
\implies \lambda'_2 = \frac{\ell_{1,0}-\lambda_{\ell_{3,0}}\ell_{1,0}+F_{3,1}-F_{1,0}}{\ell_{3,1}+\ell_{1,0}}.
\end{aligned}
\end{equation}
where $b'_2 = \lambda'_2\ell_{3,1}-F_{3,1} > 0$.

We should note that $\lambda'_2$ in \eqref{eq1_third_step_asymm} is a feasible activation time since  $F_{1,0} \leq (1-\lambda-\lambda_{\ell_{3,0}})\ell_{1,0}$. Now, we consider two cases: $F_{2,0} \leq\lambda'_2\ell_{2,0}$ and $F_{2,0} > \lambda'_2\ell_{2,0}$.

In the first case, we modify the following flow variables.
\begin{equation}
\begin{aligned}
F'_{3,1} = \lambda'_2\ell_{3,1},\\
F'_{1,0} = F_{1,0}+b'_2.
\end{aligned}
\end{equation}
When the remaining flow variables stay same, this new set of variables gives a feasible solution for LP P1 and achieves a higher capacity than the optimal solution due to increase in $F_{3,1}$. Therefore, this causes a contradiction again and shows that $F_{3,1}$ satisfies the capacity constraint $({\rm P1}a)$ in \eqref{capacity_lp} in an optimal solution.

In the second case, we modify the following flow variables.
\begin{equation}
\begin{aligned}
F'_{2,0} = \lambda'_2\ell_{2,0},\\
F'_{3,1} = \lambda'_2\ell_{3,1},\\
F'_{3,2} = F'_{2,0}+F_{2,1},\\
F'_{1,0} = F'_{3,1}+F_{2,1}.
\end{aligned}
\end{equation}
The remaining variables stay same. This new set of variables gives a feasible solution and achieves a higher capacity since the increase in $F_{1,0}$ is larger than decrease in $F_{2,0}$.
\begin{equation}
\begin{aligned}
F_{2,0}-F'_{2,0} \leq (\lambda-\lambda'_2)\ell_{2,0},\\
F'_{1,0}-F_{1,0} \geq (\lambda-\lambda'_2)\ell_{1,0}.
\end{aligned}
\end{equation}
Since $\ell_{1,0} > \ell_{2,0}$, the capacity $F_{3,0}+F_{3,1}+F_{3,2} = F_{1,0}+F_{2,0}+F_{3,0}$ increases. This creates a contradiction because this new set of variables achieve a higher capacity than the optimal solution for the asymmetric network. As a result, we claim that the following holds in an asymmetric network when $\beta_1 > \beta_2$.
\begin{equation}\label{eq2_third_step_asymm}
\begin{aligned}
F_{2,0} < F_{3,1}.
\end{aligned}
\end{equation}
Through using the property in \eqref{eq2_third_step_asymm}, we can prove that there exists a symmetric network that gives at least the same capacity as the asymmetric network. Consider we have an asymmetric network as in Fig.~\ref{projected_network} and $\beta_1 > \beta_2$. Now, we bring relay $2$ closer to relay $1$ such that $\beta_1 = \beta'_2$. We next show that with this symmetric network, we can achieve the same capacity. Towards this end, we first write the relationships between variables in an optimal solution for the asymmetric network based on constraints in LP P1, proofs given above and the property in \eqref{eq2_third_step_asymm}.
\begin{equation}
\begin{aligned}
\lambda_{\ell_{3,1}} = \lambda_{\ell_{2,0}} = \lambda,\\
\lambda_{\ell_{1,0}} = \lambda_{\ell_{3,2}} = 1-\lambda-\lambda_{\ell_{3,0}},\\
\lambda_{\ell_{2,1}} = 1-\lambda,\\
\lambda_{\ell_{1,2}} = 0.
\end{aligned}
\end{equation}
\begin{equation}\label{eq3_third_step_asymm}
\begin{aligned}
F_{2,0} < F_{3,1} = \lambda\ell_{3,1},\\
F_{3,2} = F_{2,0}+F_{2,1} < F_{1,0}, \\
F_{1,0} = F_{3,1}+F_{2,1} \leq (1-\lambda-\lambda_{\ell_{3,0}})\ell_{1,0}.
\end{aligned}
\end{equation}
Now, we bring relay $2$ closer to relay $1$ such that $\beta_1 = \beta'_2$. In this case, the distances between nodes are modified as follows.
\begin{equation*}
\begin{aligned}
d'_{2,0} = d_{3,1},\\
d'_{3,2} = d_{1,0},\\
d'_{2,1} = d-2\beta_1d.
\end{aligned}
\end{equation*}
Through the path loss model in \eqref{channel_model}, 
\begin{equation*}
\begin{aligned}
\ell'_{2,0} = \ell_{3,1},\\
\ell'_{3,2} = \ell_{1,0},\\
\ell'_{2,1} > \ell_{2,1}.
\end{aligned}
\end{equation*}
Now, we use the same activation times $\lambda_{\ell_{j,i}}$ as in the optimal solution of the asymmetric network. Thus, we have the following capacity constraints on $F_{2,0}$ and $F_{3,2}$ in the symmetric network.
\begin{equation}
\begin{aligned}
F_{2,0} \leq \lambda\ell_{3,1}\\
F_{3,2} \leq (1-\lambda-\lambda_{\ell_{3,0}})\ell_{1,0}
\end{aligned}
\end{equation}
These capacity constraints are satisfied due to the relationship given in \eqref{eq3_third_step_asymm}. Therefore, we can use the same activation times and flow variables in this symmetric network, and this gives a feasible solution that achieves the same capacity as the asymmetric network. This concludes the proof for the case where $\beta_1 > \beta_2$.

\section{Proof of Lemma~\ref{properties}}\label{symmetry_properties}
\subsection{Proof of the properties in \eqref{property1}} 
Here, we prove this property for the symmetric projected networks, considered in Lemma~\ref{properties}. From the symmetry property of the network, it is not difficult to see that for any optimal solution $\{\lambda_{\ell_{j,i}}\}$ of the LP P3 in~\eqref{capacity_lp_cut_N_2}, we can construct another optimal solution $\{\lambda'_{\ell_{j,i}}\}$ by setting
\begin{align*}
    \lambda'_{\ell_{1,0}} &= \lambda_{\ell_{3,2}}, \qquad  \lambda'_{\ell_{3,2}} = \lambda_{\ell_{1,0}},\\  
    \lambda'_{\ell_{2,0}} &= \lambda_{\ell_{3,1}}, \qquad  \lambda'_{\ell_{3,1}} = \lambda_{\ell_{2,0}},\\
    \lambda'_{\ell_{2,1}} &= \lambda_{\ell_{2,1}}, \qquad \lambda'_{\ell_{1,2}} = \lambda_{\ell_{1,2}},\\
    \lambda'_{\ell_{3,0}} &= \lambda_{\ell_{3,0}}.
\end{align*}
Now since the LP P4 has an affine objective function, then the midpoint between $\lambda'$ and $\lambda$ is also optimal. Thus, there exists an optimal solution that satisfies the property in~\eqref{property1}.

\medskip

\subsection{Proof of the properties in \eqref{property2}}
We first note that the LP P3 and the LP P1 share the same $\lambda_{j,i}$ variables for the case $N=2$. 
Now, we focus on the case where $F_{1,2} \neq 0$ in the LP P3, since otherwise $\lambda_{1,2}$ can be pushed to zero without consequence on the objective function is trivial. From the condition $({\rm P1}b)$ in~\eqref{capacity_lp}, we have the following for any feasible point
\[
F_{1,2} = F_{2,1} + F_{2,0} - F_{3,2},
\]
thus, we can reduce the flow $F_{1,2}$ to zero and equally reduce the flow of the sum $F_{2,1} + F_{2,0}$ without affecting $F_{3,2}$. Thus, we arrive at another feasible point with the same objective value while having $F_{1,2} = 0$. Reducing $\lambda_{1,2}$ subsequently still maintains a feasible point and doesn't affect the objective function. 

For the proof of $\lambda_2+\lambda_{\ell_{2,1}} = 1$, we assume that we have an optimal solution for P3 in~\eqref{capacity_lp_cut_N_2} satisfying the property~\eqref{property1} and that $\lambda_{\ell_{1,2}} = 0$ as proved above.  From~\eqref{property1}, we have that
\begin{align*}
    \lambda_{1} = \lambda_{\ell_{1,0}} = \lambda_{\ell_{3,2}}, \\
    \lambda_{2} = \lambda_{\ell_{3,1}} = \lambda_{\ell_{2,0}}.
\end{align*}
Now given the aforementioned optimal solution for P3, the only upper bounding constraint we have on $\lambda_{\ell_{2,1}}$, is given by
\[
    \lambda_{\ell_{2,1}} + \lambda_2 \leq 1.
\]
While fixing all other variables from the optimal solution, it is not difficult to see that increasing $\lambda_{\ell_{2,1}}$ (if needed) so that the inequality is satisfied with equality leads to a feasible point with the same objective function value as the optimal. Thus there exists an optimal solution satisfying that $\lambda_{\ell_{2,1}} + \lambda_2 = 1$.

\medskip

\subsection{Proof of the properties in \eqref{property3}} 
In order to prove that the optimal solution satisfies $\lambda_{\ell_{3,0}} = 0$ and $\lambda_1+\lambda_2 = 1$, we consider two cases: $\ell_{2,1} \geq \ell_1$ and $\ell_{2,1} < \ell_1$ where $\ell_1$ is defined in~\eqref{modified_linkcap}.

In the first case, we consider that there is an optimal solution $\left\{\lambda^{(1)}_{\ell_{j,i}}\right\}$ and $\left\{F^{(1)}_{j,i}\right\}$ for LP P1 in~\eqref{capacity_lp} such that
\begin{equation*}
\begin{aligned}
\lambda_1^{(1)}+\lambda_2^{(1)}+\lambda_{\ell_{3,0}}^{(1)} = 1,\\
\lambda_2^{(1)}+\lambda_{\ell_{2,1}}^{(1)} = 1.
\end{aligned}
\end{equation*}
where $\lambda_1$ and $\lambda_2$ are defined in~\eqref{property1}, and we use the properties in~\eqref{property1} and \eqref{property2}.

We note that there exists an optimal solution that satisfies $\lambda_1^{(1)}+\lambda_2^{(1)}+\lambda_{\ell_{3,0}}^{(1)} \leq 1$ constraint in LP P1 with equality. Otherwise, we can increase $\lambda^{(1)}_{\ell_{3,0}}$ until the constraint is satisfied with equality while fixing other variables. In this case, it is clear that all constraints in LP P1 are still satisfied by this new solution and the same capacity is achieved. 

Now, if we assume that there is an optimal solution where $\lambda^{(1)}_{\ell_{3,0}} > 0$, we can modify the activation times as follows.
\begin{equation}
\begin{aligned}
\lambda_{\ell_{3,0}}^{(2)} = 0,\\
\lambda_1^{(2)} = \lambda_1^{(1)}+\lambda_{\ell_{3,0}}^{(1)}.
\end{aligned}
\end{equation}
The remaining activation times stay same. Moreover, we modify the flow variables $F_{j,i}^{(1)}$ as follows.
\begin{equation}
\begin{aligned}
F_{1,0}^{(2)} = F_{1,0}^{(1)}+F_{3,0}^{(1)},\\
F_{2,1}^{(2)} = F_{2,1}^{(1)}+F_{3,0}^{(1)},\\
F_{3,2}^{(2)} = F_{3,2}^{(1)}+F_{3,0}^{(1)},\\
F_{3,0}^{(2)} = 0.\\
\end{aligned}
\end{equation}
The remaining flow variables stay same.

Through this modification, we reduce the flow of the direct link from the source to the destination to zero and send the flow $F^{(1)}_{3,0}$ through the path $p$ where $p$ is the path from the source to the destination and it passes through both of the relays. It is not difficult to see that the modified activation times still satisfy the constraints in \eqref{capacity_lp} and the flow conservation constraints $({\rm P1}b)$ are satisfied as well. We next show that the following capacity constraints in $({\rm P1}a)$ for the modified flow variables are still satisfied when $\ell_{2,1} \geq \ell_1$.
\begin{equation}
\begin{aligned}
    F_{1,0}^{(2)} = F_{1,0}^{(1)}+F_{3,0}^{(1)} \leq \lambda_1^{(1)}\ell_1+\lambda_{\ell_{3,0}}^{(1)}\ell_{3,0} \leq \lambda_1^{(2)}\ell_1,\\
    F_{3,2}^{(2)} = F_{3,2}^{(1)}+F_{3,0}^{(1)} \leq \lambda_1^{(1)}\ell_1+\lambda_{\ell_{3,0}}^{(1)}\ell_{3,0} \leq \lambda_1^{(2)}\ell_1,\\
    F_{2,1}^{(2)} = F_{2,1}^{(1)}+F_{3,0}^{(1)} \leq F_{1,0}^{(1)}+F_{3,0}^{(1)} \leq \lambda_1^{(2)}\ell_1 \leq \lambda_{\ell_{2,1}}^{(2)}\ell_3.
\end{aligned}
\end{equation}
where $\ell_3 = \ell_{2,1}$.

In the first two constraints, we use the fact that $\ell_{3,0}$ is the smallest link capacity in the network. In the last one, we first use  $F_{2,1}^{(1)} \leq F_{1,0}^{(1)}$ due to the constraint in $({\rm P1}b)$ and then, we use the fact that $\lambda_{\ell_{2,1}}^{(1)} = \lambda_{\ell_{2,1}}^{(2)} = \lambda_1^{(2)}$. Since this new set of variables provides a feasible solution that achieves the same capacity as the optimal solution, we claim that at least one of the optimal solutions will satisfy $\lambda_{\ell_{3,0}} = 0$ and $\lambda_1+\lambda_2 = 1$ if $\ell_3 \geq \ell_1$. 

In the second case,  we assume that $\ell_3 < \ell_1 $ and we use the following linear program proposed in \cite{ezzeldin}.   
\begin{align}
\label{capacity_paths}
\begin{array}{llll}
&\ \rm{P5:}\ \widebar{\msf{C}} = {\rm max}  \displaystyle\sum_{p \in \mathcal{P}} x_p \mathsf{C}_p   & & \\
&      ({\rm P5}a) \ x_p \geq 0 & \forall p \!\in\! \mathcal{P}, &  \\
&    ({\rm P5}b) \ \displaystyle\sum_{p \in \mathcal{P}_i}  x_p f^p_{p\pnext(i),i} \!\leq\! 1 & \forall i \! \in \! [0\!:\!N], & \\
&   ({\rm P5}c) \ \displaystyle\sum_{p \in \mathcal{P}_i} x_p f^p_{i,p\pprev(i)} \!\leq\! 1 & \forall i \! \in \! [1\!:\!N\!+\!1],  &
\end{array}
\end{align}
where $\mathcal{P}$ is the collection of all paths going from the source to the destination, $C_p$ is the capacity of path $p$, $\mathcal{P}_i \subseteq \mathcal{P}$ is the collection of paths passing through node $i$ where $i \in [0:N+1]$, $p\pnext(i)$ (respectively, $p\pprev(i)$) is the node that follows (respectively, precedes) node $i$ in path $p$, the variable $x_p$ is the fraction of time path $p$ is used and $f^p_{j,i}$ is the optimal activation time for the link of capacity $\ell_{j,i}$ when path $p$ is operated, i.e.,
\begin{equation*}
    f^p_{j,i} = \frac{C_p}{\ell_{j,i}}.
\end{equation*}

In \cite{ezzeldin}, it is proved that this program is an equivalent program to the one in \eqref{capacity_lp}. We can particularly write this program for our symmetric and projected network under the condition $\ell_3 < \ell_1 $. There are four paths $p_1$, $p_2$, $p_3$ and $p_4$ in our network where $p_1$ is the direct path between the source and the destination, $p_2$ (respectively, $p_3$) is the path going from the source to the destination while passing through relay $1$ (respectively, relay $2$) and path $p_4$ is the path going from the source to the destination while passing through both relays.
\begin{align}
\label{primal_lp}
\begin{array}{llll}
&\ \rm{P6:}\ \widebar{\msf{C}} ={\rm max} \displaystyle  \ x_1\ell_{3,0}+2x_2\ell_2+x_4\ell_3 &&\\
&({\rm P6}a) \ x_1+x_2\left(\frac{\ell_2}{\ell_1}+1\right)+x_4\frac{\ell_3}{\ell_1} \leq 1&\\ & ({\rm P6}b) \ x_2+x_4 \leq 1&\\ &({\rm P6}c)\ x_1, x_2, x_4 \geq 0&
\end{array}
\end{align}
where $\ell_1$, $\ell_2$ and $\ell_3$ are given in \eqref{modified_linkcap}.\\
Here, we take $x_2$ = $x_3$ due to the symmetric nature of our network. Our aim is to show that the optimal solution of this linear program satisfies $x_1 = 0$. Hence, the optimal solution does not send information through the direct path between the source and the destination. Since the linear program in \eqref{primal_lp} is equivalent to the program in \eqref{capacity_lp}, the corresponding $\lambda_{\ell_{3,0}}$ is equal to zero in this case.

Towards this end, we first find the dual of the primal program given in \eqref{primal_lp}. Then, we show that the feasible point $x = (x_1,x_2,x_4) = (0,\frac{\ell_1-\ell_3}{\ell_1+\ell_2-\ell_3},\frac{\ell_2}{\ell_1+\ell_2-\ell_3})$ satisfies the KKT conditions, therefore, it is an optimal solution to the program in \eqref{primal_lp}.

It is not difficult to derive the following dual program of the linear program in \eqref{primal_lp}. 
\begin{align}
\label{dual_lp}
\begin{array}{llll}
\displaystyle
&\ \rm{D1:}\ {\rm min} \displaystyle \ v_1+v_2\\
&({\rm D1}a) \ -\ell_{3,0}+v_1-s_1 = 0\\ & ({\rm D1}b) \ -2\ell_2+\frac{v_1\ell_2}{\ell_1}+v_1+v_2-s_2 = 0\\ &({\rm D1}c)\ -\ell_3+\frac{v_1\ell_3}{\ell_1}+v_2-s_3 = 0\\&({\rm D1}d) \ v_1,v_2,s_1,s_2,s_3 \geq 0
\end{array}
\end{align}
where $v_1, v_2, s_1,s_2,s_3$ are dual variables.\\ Then, we find the following complementary slackness conditions.
\begin{equation}
\begin{aligned}
    v_1\left(x_1+x_2\left(\frac{\ell_2}{\ell_1}+1\right)+x_4\frac{\ell_3}{\ell_1}-1\right) = 0\\
    v_2(x_2+x_4-1) = 0\\
    s_1x_1 = 0,\ s_2x_2 = 0,\ s_3x_4 = 0
\end{aligned}
\end{equation}
In order point $x = (x_1,x_2,x_4) = (0,\frac{\ell_1-\ell_3}{\ell_1+\ell_2-\ell_3},\frac{\ell_2}{\ell_1+\ell_2-\ell_3})$ to be optimal, we need to find a feasible dual solution such that point $x$ and the dual solution satisfy the Karush-Kuhn-Tucker (KKT) conditions together. In the primal program, point $x$ satisfies three constraints with equality, therefore, the corresponding dual variables $v_1, v_2,s_1$ can take nonzero values. The remaining dual variables are zero to satisfy the complementary slackness conditions. It is not difficult to see that feasible $v_1, v_2,s_1$ values for the dual program can be found if the following condition holds.
\begin{equation}\label{optimality_cond}
\begin{aligned}
    \frac{\ell_1(2\ell_2-\ell_3)}{\ell_1+\ell_2-\ell_3} \geq \ell_{3,0}
\end{aligned}
\end{equation}
We next show that when $\frac{\gamma}{d^a} > 3^a$ and $\ell_3 < \ell_1$, this condition is satisfied. Towards this end, we first write the condition in \eqref{optimality_cond} through using the link capacities in \eqref{modified_linkcap} and obtain the following inequality. 
\begin{multline}\label{modified_opt_cond}
    f = \log\left(\frac{\gamma}{\beta^ad^a}\right)\log\left(\frac{\gamma(1-2\beta)^a}{(1-\beta)^{2a}d^a}\right) \\- \log\left(\frac{\gamma(1-2\beta)^a}{(1-\beta)^a\beta^ad^a}\right)\log\left(\frac{\gamma}{d^a}\right) \geq 0
\end{multline}
We should note that the condition $\ell_3 < \ell_1$ can be equivalently written as $\beta < \frac{1}{3}$. We then find a condition on $\log(s)$ such that the condition in \eqref{modified_opt_cond} is satisfied where $s = \gamma/d^a$.
\begin{equation}
\begin{aligned}
    log(s) \geq a\frac{log(\beta)}{log(1-\beta)}\log\left(\frac{(1-\beta)^2}{1-2\beta}\right)
\end{aligned}
\end{equation}
We can denote the function on the RHS of the inequality as $\hat{f}$. Through using basic calculus, it is not difficult to see that the derivative of the function $\hat{f}$ is non-negative when $\beta < \frac{1}{3}$. Therefore, the function $\hat{f}$ is a monotonically increasing function in $\beta$. At $\beta = \frac{1}{3}$, $\hat{f} \approx 1.13a$. Hence, $s$ should be greater than or equal to $2^{1.13a}$. Since we assume that $s > 3^a$, this condition is satisfied. As a result, point $x =  (0,\frac{\ell_1-\ell_3}{\ell_1+\ell_2-\ell_3},\frac{\ell_2}{\ell_1+\ell_2-\ell_3})$ is one of the optimal solutions and the corresponding $\lambda_{\ell_{3,0}} = 0$. In order to show that $\lambda_1+\lambda_2 = 1$, we can use the same argument as in the proof of $\lambda_2+\lambda_{\ell_{2,1}} = 1$. We can start with an optimal solution such that $\lambda_1+\lambda_2$ < 1 and $\lambda_{\ell_{3,0}} = 0$, then we can increase $\lambda_1$ until the inequality is satisfied with equality. If we use the same flow variables as in the optimal solution, all constraints will be satisfied and we reach the same objective value. This concludes the proof of the equality in \eqref{property3} and Lemma \ref{properties}.

\section{Proof of Lemma~\ref{third_range}}\label{appendix_third_range}
Here, we prove Lemma~\ref{third_range} by showing that when $0 < \beta \leq d/\gamma^{1/a}$, then for any fixed $\lambda_2 \in [0,1]$ either the RHS of $({\rm P4}d)$ or the RHS of $({\rm P4}f)$ is smaller than $\widebar{\msf{C}}^\star$. 

We start off by finding a condition on $\lambda_2$ such that the RHS of $({\rm P4}d)$ becomes greater than or equal to $\widebar{\msf{C}}^\star$.
\begin{align}\label{third_range_firstbound}
&\log\left(\frac{\gamma}{d^a}\right)-\lambda_2\log\left((1-\beta)^a\right)-(1-\lambda_2)\log\left(\beta^a\right) \geq \widebar{\msf{C}}^\star \nonumber \\
&\implies \lambda_2\left(\log\left(\beta^a\right)-\log\left((1-\beta)^a\right)\right) \geq \log\left(\frac{\gamma}{d^a}\right)+\log\left(\beta^a\right) \nonumber \\
&\implies \lambda_2 \leq \frac{\log\left(\frac{\gamma}{d^a}\right)+\log\left(\beta^a \right)}{\log\left(\left(\frac{\beta}{1-\beta}\right)^a\right)}
\end{align}
When the condition on $\lambda_2$ in \eqref{third_range_firstbound} holds, the first bound becomes greater than or equal to $\widebar{\msf{C}}^\star$. 

We next observe that in order for the RHS of $({\rm P4}f)$ to be greater than or equal to $\widebar{\msf{C}}^\star$, we get the following condition on $\lambda_2$.
\begin{align}\label{third_range_thirdbound}
&(1-\lambda_2)\left(\log\left(\frac{\gamma}{d^a}\right)-\log\left(\left(1-2\beta\right)^a\right)\right) \nonumber
\\
&\qquad +2\lambda_2\left(\log\left(\frac{\gamma}{d^a}\right)-\log\left(\left(1-\beta\right)^a\right)\right) \geq \widebar{\msf{C}}^\star \nonumber 
\\
&\implies \lambda_2\left(\log\left(\frac{\gamma}{d^a}\right)-2\log\left(\left(1-\beta\right)^a\right)+\log\left(\left(1-2\beta\right)^a\right)\right) \nonumber\\
&\qquad\qquad -\log\left(\left(1-2\beta\right)^a\right) \geq \log\left(\frac{\gamma}{d^a}\right) \nonumber
\\
&\implies \lambda_2 \geq \frac{\log\left(\frac{\gamma}{d^a}\right)+\log\left(\left(1-2\beta\right)^a\right)}{\log\left(\frac{\gamma\left(1-2\beta\right)^a}{d^a(1-\beta)^{2a}}\right)}.
\end{align}
We will now show that for all $0 < \beta \leq d/\gamma^{1/a}$ and assuming that $\frac{\gamma}{d^a} > 3^a$, both inequalities ~\eqref{third_range_firstbound} and~\eqref{third_range_thirdbound} cannot be satisfied simultaneously by showing that the RHS of~\eqref{third_range_thirdbound} is greater than the RHS of~\eqref{third_range_firstbound} for this range of $\beta$, thus proving our lemma. 
We would like to show that
\begin{equation}
\begin{aligned}
\frac{\log\left(\frac{\gamma(1-2\beta)^a}{d^a}\right)}{\log\left(\frac{\gamma(1-2\beta)^a}{d^a(1-\beta)^{2a}}\right)} > \frac{\log\left(\frac{\gamma\beta^a}{d^a}\right)}{\log\left(\left(\frac{\beta}{1-\beta}\right)^a\right)}
\end{aligned}
\end{equation}
This condition can be rewritten by organizing terms as showing that $0 < \beta \leq d/\gamma^{1/a}$, we have that
\begin{multline}\label{third_range_condition}
    f(\beta) = \log\left((1-\beta)^{2a}\right)\log\left(\frac{\gamma\beta^a}{d^a}\right)\\-\log\left(\frac{\gamma(1-2\beta)^a}{d^a}\right)\log\left(\frac{\gamma(1-\beta)^a}{d^a}\right) < 0.
\end{multline}
Note that at $\beta = d/\gamma^{1/a}$, $f(\beta) <0$ for $\gamma/d^a > 3^a$. Thus, it is sufficient to show that the function is monotonically increasing over our range of interest. Taking the derivative, we get that
\begin{multline}\label{eq:derivative_f_beta}
    \frac{df(\beta)}{d\beta} = \frac{-2}{1-\beta}\log\left(s\beta^a\right)+\frac{2}{\beta}\log\left((1-\beta)^a\right)\\+\frac{2}{1-2\beta}\log\left(s(1-\beta)^a\right)+\frac{1}{1-\beta}\log\left(s(1-2\beta)^a\right), 
\end{multline}
where $s = \gamma/d^a$. If we show that the derivative is positive, then we are done as the function is by consequence monotonically increasing.

Note that, when $\frac{\gamma}{d^a} > 3^a$ and $0 < \beta \leq d/\gamma^{1/a}$, it is not difficult to see that the only negative term in~\eqref{eq:derivative_f_beta} is $\frac{2}{\beta}\log\left((1-\beta)^a\right)$. Therefore, we next find a condition on $s$ that makes $\frac{2}{1-2\beta}\log\left(s(1-\beta)^a\right) + \frac{2}{\beta}\log\left((1-\beta)^a\right)> 0$. This can be rewritten as aiming to show that
\[
\log(s) > \log\left((1-\beta)^a\right)\frac{\beta-1}{\beta}.
\]
It is not difficult to verify through basic calculus that given that $\frac{\gamma}{d^a} > 3^a$, then the RHS above is monotonically decreasing and is less than $\log(s)$ at $\beta = 0$.
Thus, we have shown that constraints~\eqref{third_range_firstbound} and~\eqref{third_range_thirdbound} cannot be satisfied simultaneously and thus concluding the proof of Lemma~\ref{third_range}.

\end{document}